\newtheorem{theorem}{Theorem}[section]
\newtheorem{lemma}[theorem]{Lemma}
\newtheorem{definition}[theorem]{Definition}
\newtheorem{corollary}[theorem]{Corollary}
\newtheorem{proposition}[theorem]{Proposition}
\newtheorem{assumption}[]{Assumption}
\begin{document}

\title{Robust and Kernelized Data-Enabled Predictive Control for Nonlinear Systems}
\author{\vspace{3mm} Linbin Huang, John Lygeros, and Florian D{\"o}rfler
\thanks{The authors are with the Department of Information Technology and Electrical
Engineering at ETH Z{\"u}rich, Switzerland. (Emails: \text\{\text{linhuang}, \text{jlygeros}, \text{dorfler}\}@ethz.ch)}
\thanks{This research was supported by the SNSF under NCCR Automation and ETH Zurich Funds.}
}

	
	\maketitle
	
	\begin{abstract}
	This paper presents a robust and kernelized data-enabled predictive control (RoKDeePC) algorithm to perform model-free optimal control for nonlinear systems using only input and output data. The algorithm combines robust predictive control and a non-parametric representation of nonlinear systems enabled by regularized kernel methods.
The latter is based on implicitly learning the nonlinear behavior of the system via the representer theorem. Instead of seeking a model and then performing control design, our method goes directly from data to control. This allows us to robustify the control inputs against the uncertainties in data by considering a min-max optimization problem to calculate the optimal control sequence. We show that by incorporating a proper uncertainty set, this min-max problem can be reformulated as a nonconvex but structured minimization problem. By exploiting its structure, we present a projected gradient descent algorithm to effectively solve this problem. Finally, we test the RoKDeePC on two nonlinear example systems -- one academic case study and a grid-forming converter feeding a nonlinear load -- and compare it with some existing nonlinear data-driven predictive control methods.
	\end{abstract}
	
	\begin{IEEEkeywords}
	Data-driven control, kernel methods, nonlinear control, predictive control, robust optimization.
	\end{IEEEkeywords}
	
	\section{Introduction}\label{sec:intro}
	
	Data-driven control has attracted extensive attention in recent years, as it enables optimal control in scenarios, where data is readily available, but the system models are too complex to obtain or maintain; this is often the case, for example, in large-scale energy systems or robotics~\cite{markovsky2021behavioral, huang2019decentralized, elokda2019data}.

One standard approach to data-driven control is indirect. Here one first uses data to identify a system model and then performs control design based on the identified model. This paradigm has a long history in control theory, leading to the developments of system identification (SysID)~\cite{ljung1999system}, model predictive control (MPC)~\cite{morari1999model}, etc. Prediction error, maximum likelihood, and subspace methods are popular SysID approaches, and have been successfully combined with model-based control design in many industrial applications.
Broadly recognized challenges of the indirect approach are incompatible uncertainty estimates and customizing the SysID (e.g., in terms of objective and model order selection) for the ultimate control objective. We refer to~\cite{hjalmarsson1996model, formentin2018core, dorfler2021bridging, campestrini2017data} for different approaches. 
Moreover, recent advances in systems theory, statistics, machine learning, and deep learning have introduced many promising techniques to learn the behavior of general dynamical systems from data, e.g., by using Koopman operators, Gaussian processes, kernel methods, and neural networks~\cite{williams2015data, rasmussen2003gaussian, pillonetto2014kernel, pascanu2013construct,lian2021koopman,van2021kernel}. The learned models can also be combined with model-based control design (e.g., MPC) to perform optimal control~\cite{lenz2015deepmpc, korda2018linear, kocijan2004gaussian, chen2018optimal, hewing2019cautious, maddalena2021kpc}. However, these methods still belong to indirect data-driven control, and one may not be able to provide deterministic guarantees on the control performance, especially when complicated structures, e.g., neural networks, are used to learn the system's behaviors. We note that compared to using neural networks, function estimation using regularized kernel methods has a tractable formulation and solution, thanks to the well-posedness of the function classes in reproducing kernel Hilbert spaces (RKHSs)~\cite{pillonetto2014kernel, scholkopf2001generalized}.


An alternative formulation is direct data-driven control that aims to go directly from data to control, bypassing the identification step. In recent years, a result formulated by Willems and his co-authors in the context of behavioral system theory, known as the \textit{Fundamental Lemma}~\cite{willems2005note}, has been widely used in formulating direct data-driven control methods such as~\cite{coulson2019data, de2019formulas, berberich2019data, markovsky2021behavioral}. The Fundamental Lemma shows that the subspace of input/output trajectories of a linear time-invariant (LTI) system can be obtained from the column span of a data Hankel matrix, which acts as a data-centric representation of the system dynamics. Unlike indirect data-driven control methods which usually involve a bi-level optimization problem (an inner problem for the SysID step and an outer problem for the control design step), these direct data-driven control methods formulate one single data-to-control optimization problem. In this paradigm, one can conveniently provide robust control against uncertainties in data via regularization and relaxation~\cite{dorfler2021bridging}. For instance, in the data-enabled predictive control (DeePC) algorithm~\cite{coulson2019data}, a proper regularization leads to end-to-end distributional robustness against uncertainties in data~\cite{coulson2019regularized, huang2020quad}. Moreover, performance guarantees on the realized input/output cost can be provided given that the inherent uncertainty set is large enough~\cite{huang2021robust}.
Motivated in part by the desire to extend this direct data-driven approach beyond LTI systems, Fundamental Lemma has been extended for certain classes of nonlinear systems, e.g., Hammerstein-Wiener systems~\cite{berberich2020trajectory}, second order discrete Volterra systems~\cite{rueda2020data}, flat nonlinear systems~\cite{alsalti2021data}, and polynomial time-invariant systems~\cite{markovsky2021data}, which enables the control design of these classes of nonlinear systems. However, to the best of our knowledge, there has not been a version of Fundamental Lemma for general nonlinear systems.

In this paper, we develop a direct data-driven method for nonlinear systems, referred to as \textit{robust and kernelized DeePC} (RoKDeePC), which combines kernel methods with a robustified DeePC framework.
We use regularized kernel methods~\cite{scholkopf2001generalized} to implicitly learn the future behavior of nonlinear systems, then directly formulate a data-to-control min-max optimization problem to obtain robust and optimal control sequences. We prove that the realized cost of the system is upper bounded by the optimization cost of this optimization problem, leading to deterministic performance guarantees.
Furthermore, we demonstrate how this min-max problem can be reformulated as a nonconvex yet structured minimization problem when appropriate uncertainty sets are considered. To enable a real-time implementation, we develop a projected gradient descent algorithm exploiting the problem structure to effectively solve the RoKDeePC problem. Unlike learning-based control using neural networks, our method does not require a time-consuming off-line learning process and has real-time plug-and-play ability to handle nonlinear systems. We test RoKDeePC on two nonlinear example systems -- one academic case study and a grid-forming converter feeding a nonlinear load -- and compare its performance with the Koopman-based MPC method in~\cite{korda2018linear} and MPC that uses a kernel-based predictor as the system model. These methods are all based on linear predictors in the lifted high-dimensional space and thus fair comparisons can possibly be made.

The rest of this paper is organized as follows: in Section~II we review the linear predictor and then introduce the kernel-based nonlinear predictor using input/output data. Section~III proposes the RoKDeePC algorithm and demonstrates its performance guarantees. Section~IV develops gradient descent algorithms to efficiently solve the RoKDeePC. Section~V tests the RoKDeePC algorithm on two example nonlinear systems. We conclude the paper in Section~VI.

	\section*{Notation}
Let $\mathbb N$ denote the set of positive integers, $\mathbb S^{n}$ $(\mathbb S^{n}_{>0})$ the set of real $n$-by-$n$ symmetric (positive definite) matrices. We denote the index set with cardinality $n\in \mathbb N$ as $[n]$, that is, $[n] = \{ 1, ..., n \}$.
We use $\|x\|$ ($\|A\|$) to denote the (induced) 2-norm of the vector $x$ (the matrix $A$); we use $\|x\|_1$ to denote the 1-norm of the vector $x$; for a vector $x$, we use ${\left\| x \right\|_A^2}$ to denote  $x^\top A x$; for a matrix $A$, we use $\|A\|_Q$ to denote $\|Q^{\frac{1}{2}}A\|$; we use $\|A\|_F$ to denote the Frobenius norm of the matrix $A$. We use $x_{[i]}$ to denote the $i{\rm th}$ element of $x$ and $x_{[i:j]}$ to denote the vector $[x_{[i]}\;x_{[i+1]}\; \cdots \; x_{[j]}]^\top$; we use $A_{[:i]}$ to denote the $i{\rm th}$ column of $A$, $A_{[i:]}$ to denote the $i{\rm th}$ row of $A$, and $A_{[ij]}$ to denote $i{\rm th}$-row $j{\rm th}$-column element of $A$.
We use $\boldsymbol 1_n$ to denote a vector of ones of length $n$, $\boldsymbol 1_{m\times n}$ to denote a $m$-by-$n$ matrix of ones, and $I_n$ to denote an $n$-by-$n$ identity matrix (abbreviated as $I$ when the dimensions can be inferred from the context).
We use ${\rm col}(Z_0,Z_1,...,Z_\ell)$ to denote the matrix $[Z_0^{\top}\; Z_1^{\top}\; \cdots \; Z_\ell^{\top}]^{\top}$. 

	\section{Kernel-Based Nonlinear Predictors}
	
	\subsection{Explicit Predictors for LTI Behavior}
	
	Consider a discrete-time linear time-invariant (LTI) system
	\begin{equation}
		\left\{ \begin{array}{l}
			{x_{t + 1}} = A{x_t} + B{u_t}\\
			{y_t} = C{x_t} + D{u_t},
		\end{array} \right.\,		\label{eq:ABCD}
	\end{equation}
	where $A \in \mathbb{R}^{n \times n}$, $B \in \mathbb{R}^{n \times m}$, $C \in \mathbb{R}^{p \times n}$, $D \in \mathbb{R}^{p \times m}$, $x_t \in \mathbb{R}^n$ is the state of the system at~$t \in \mathbb{Z}_{ \ge 0}$,~$u_{t} \in \mathbb{R}^m$ is the input vector, and $y_{t} \in \mathbb{R}^p$ is the output vector. Recall the extended observability matrix
	\vspace{-0.5mm}
	\begin{equation*}
		\mathscr{O}_{\ell}(A,C) := {\rm col}(C,CA,...,CA^{\ell-1}) \,.
	\end{equation*}
	The \textit{lag} of the system (\ref{eq:ABCD}) is defined by the smallest integer $\ell \in \mathbb{Z}_{ \ge 0}$ such that the observability matrix $\mathscr{O}_{\ell}(A,C)$ has rank $n$, i.e., the state can be reconstructed from $\ell$ measurements.
	Consider $L,T \in \mathbb{Z}_{ \ge 0}$ with $T \geq L > \ell$ and length-$T$ input and output trajectories of \eqref{eq:ABCD}: $u = {\rm col} (u_0,u_1,\dots u_{T-1})\in \mathbb{R}^{mT}$ and $y ={\rm col} (y_0,y_1, \dots y_{T-1})\in \mathbb{R}^{pT}$. For the inputs $u$, define the Hankel matrix of depth $L$ as
	\begin{equation}
		\mathscr{H}_L(u) := \left[ {\begin{array}{*{20}{c}}
				{{u_0}}&{{u_1}}& \cdots &{{u_{T - L}}}\\
				{{u_1}}&{{u_2}}& \cdots &{{u_{T - L + 1}}}\\
				\vdots & \vdots & \ddots & \vdots \\
				{{u_{L-1}}}&{{u_{L}}}& \cdots &{{u_{T-1}}}
		\end{array}} \right] \,.		
		\label{eq:Hankel_L}
	\end{equation}
	Accordingly, for the outputs define the Hankel matrix $\mathscr{H}_L(y)$. Consider the stacked matrix
	$\mathscr{H}_L(u,y) = \left[\begin{smallmatrix}\mathscr{H}_L(u)\\\mathscr{H}_L(y)\end{smallmatrix}\right]$. By \cite[Corollary 19]{markovskyidentifiability},  the restricted behavior (i.e., the input/output trajectory of length $L$) equals the image of $\mathscr{H}_L(u,y)$ if and only if rank$\left(\mathscr{H}_L(u,y)\right)=mL + n$. 
	
	This behavioral result can be leveraged for data-driven prediction and estimation as follows. Consider $T_{\rm ini},N,T \in \mathbb{Z}_{ \ge 0}$, as well as an input/output time series ${\rm col}(u^{\rm{d}},y^{\rm{d}}) \in \mathbb{R}^{(m+p)T}$ so that rank$\left(\mathscr{H}_{T_{\rm ini}+N}(u^{\rm{d}},y^{\rm{d}})\right)=m(T_{\rm ini}+N) + n$. Here the superscript ``d'' denotes data collected offline, and the rank condition is met by choosing $u^{\rm{d}}$ to be persistently exciting of sufficiently high order and by assuming that the system is controllable and observable. The Hankel matrix $\mathscr{H}_{T_{\rm ini}+N}(u^{\rm{d}},y^{\rm{d}})$ can be partitioned into
	\begin{equation*}
		\left[ {\begin{array}{*{20}{c}}
				{{U_{\rm P}}}\\
				{{U_{\rm F}}}
		\end{array}} \right] := \mathscr{H_c}_{T_{\rm ini}+N}(u^{\rm{d}}) \quad \text{and} \quad \left[ {\begin{array}{*{20}{c}}
				{{\bar Y_{\rm P}}}\\
				{{\bar Y_{\rm F}}}
		\end{array}} \right] := \mathscr{H_c}_{T_{\rm ini}+N}(y^{\rm{d}})\,,		\label{eq:partition_Huy}
	\end{equation*}
	where $U_{\rm P} \in \mathbb{R}^{mT_{\rm ini} \times H_c}$, $U_{\rm F} \in \mathbb{R}^{mN \times H_c}$, $\bar Y_{\rm P} \in \mathbb{R}^{pT_{\rm ini} \times H_c}$, $\bar Y_{\rm F} \in \mathbb{R}^{pN \times H_c}$, and $H_c = T-T_{\rm ini}-N+1$. In the sequel, the data in the partition with subscript P (for ``past'') will be used to implicitly estimate the initial condition of the system, whereas the data with subscript F will be used to predict the ``future'' trajectories. In this case, $T_{\rm ini}$ is the length of an initial trajectory measured in the immediate past during on-line operation, and $N$ is the length of a predicted trajectory starting from the initial trajectory. 
The Fundamental Lemma shows that the image of $\mathscr{H}_{T_{\rm ini}+N}(u^{\rm{d}},y^{\rm{d}})$ spans all length-$(T_{\rm ini}+N)$ trajectories~\cite{willems2005note}, that is,
 ${\rm{col}}(u_{\rm ini},u,\bar y_{\rm ini},y) \in \mathbb R^{(m+p)(T_{\rm ini}+N)}$ is a trajectory of (\ref{eq:ABCD}) if and only if there exists a vector $g \in \mathbb{R}^{H_c}$ so\,that
\begin{equation}\label{eq:Hankel_g}
\left[ {\begin{array}{*{20}{c}}
	{{U_{\rm P}}}\\
	{{\bar Y_{\rm P}}}\\
	{{U_{\rm F}}}\\
	{{\bar Y_{\rm F}}}
	\end{array}} \right]g = \left[ {\begin{array}{*{10}{c}}
	{{u_{\rm ini}}}\\
	{{\bar y_{\rm ini}}}\\
	u\\
	y
	\end{array}} \right]\,.		
\end{equation}
The initial trajectory ${\rm{col}}(u_{\rm ini},\bar y_{\rm ini}) \in \mathbb R^{(m+p)T_{\rm ini}}$ can be thought of as setting the initial condition for the future (to be predicted) trajectory ${\rm{col}}(u,y)\in \mathbb R^{(m+p)N}$. In particular, if $T_{\rm ini} \ge \ell$, for every given future input trajectory $u$, the future output trajectory $y$ is uniquely determined through (\ref{eq:Hankel_g}) \cite{markovsky2008data}.

In this case, one can consider an explicit linear predictor as
\begin{equation}\label{eq:linear_predictor}
    y = M{\rm col}(u_{\rm ini},\bar y_{\rm ini},u) \,,
\end{equation}
where $M \in \mathbb{R}^{pN \times [(m+p)T_{\rm ini}+mN]}$ is a linear mapping and can be calculated from data as
\begin{equation}\label{eq:LS_M}
    M = \bar Y_{\rm F}\left[ {\begin{array}{*{20}{c}}
	{{U_{\rm P}}}\\
	{{\bar Y_{\rm P}}}\\
	{{U_{\rm F}}}
	\end{array}} \right]^+ \,,
\end{equation}
where $+$ denotes the pseudoinverse operator.
Note that \eqref{eq:LS_M} is the solution of the following linear regression problem
\begin{equation}
    \begin{array}{cl}
			\displaystyle \mathop {\min}\limits_{M}  &  \left\| \bar Y_{{\rm F}} - M\left[ {\begin{array}{*{20}{c}}
	{{U_{{\rm P}}}}\\
	{{\bar Y_{{\rm P}}}}\\
	{{U_{{\rm F}}}}
	\end{array}} \right] \right\|_F,
		\end{array}
\end{equation}
where ${\rm col}(U_{{\rm P}[:i]},\bar Y_{{\rm P}[:i]},U_{{\rm F}[:i]},\bar Y_{{\rm F}[:i]})$ can be considered as one sample trajectory of ${\rm col}(u_{\rm ini},\bar y_{\rm ini},u,y)$.

In addition to Hankel matrices, one can also use more input/output data to construct (Chinese) Page matrices, mosaic Hankel matrices, or trajectory matrices as data-driven predictors \cite{markovskyidentifiability, coulson2020distributionally, van2020willems}.

\subsection{Explicit Kernel-Based Nonlinear Predictors}

We now consider nonlinear systems, where the future outputs are nonlinear functions of the initial trajectory and the future inputs
\begin{equation}\label{eq:nonlinear_pred}
    y_{[i]} = f_i(u_{\rm ini},\bar y_{\rm ini},u),\; i \in [pN]\,.
\end{equation}
In what follows, we focus on how to find the nonlinear functions $f_i$ that best fit the observed nonlinear trajectories contained in ${\rm col}(U_{{\rm P}},\bar Y_{{\rm P}},U_{{\rm F}},\bar Y_{{\rm F}})$.

To reconcile flexibility of the functions class of $f_i$ and well-poseness of the function estimation problem, we assume that the functions $f_i$ belong to an RKHS~\cite{saitoh2016theory}.

\begin{definition}
    An RKHS over a non-empty set $\mathscr{X}$ is a Hilbert space of functions $f:\mathscr{X} \to \mathbb{R}$ such that for each $x \in \mathscr{X}$, the evaluation functional $E_xf:=f(x)$ is bounded.
\end{definition}

An RKHS is associated with a positive semidefinite kernel, called reproducing kernel, which encodes the properties of the functions in this space.

\begin{definition}
    A symmetric function $K:\mathscr{X} \times \mathscr{X} \to \mathbb{R}$ is called positive semidefinite kernel if, for any $h \in \mathbb{N}$,
    \begin{equation*}
        \sum\limits_{i=1}^{h} \sum\limits_{j=1}^{h} \alpha_i\alpha_jK(x_i,x_j) \ge 0,\; \forall (x_k,\alpha_k) \in (\mathscr{X},\mathbb{R}),\; k \in [h] \,.
    \end{equation*}
    The kernel section of $K$ centered at $x$ is $K_x(\cdot) = K(x,\cdot),\; \forall x \in \mathscr{X}$.
\end{definition}

According to the Moore–Aronszajn theorem~\cite{aronszajn1950theory}, an RKHS is fully characterized by its reproducing kernel. To be specific, if a function $f: \mathscr{X} \to \mathbb{R}$ belongs to an RKHS $\mathcal{H}$ with kernel $K$, then it can be expressed as
\begin{equation}\label{eq:RKHS_f}
    f(x) = \sum\limits_{i=1}^{h}\alpha_iK_{x_i}(x) \,,
\end{equation}
for some $\alpha_i \in \mathbb{R}, \;x_i \in \mathscr{X}$, for all $i \in [h]$, possibly with infinite $h$. Then, the inner product in the RKHS $\mathcal{H}$ reduces to
\begin{equation*}
    \langle f,g \rangle := \sum\limits_{i=1}^{h} \sum\limits_{j=1}^{h'} \alpha_i\beta_jK(x_i,x_j)\,,
\end{equation*}
where $f,g \in \mathcal{H}$ and $g(x) = \sum\nolimits_{j=1}^{h'}\beta_jK_{x_j}(x)$, and the induced norm of $f$ to $\|f\|_{\mathcal{H}}^2 = \sum\nolimits_{i=1}^{h} \sum\nolimits_{j=1}^{h} \alpha_i\alpha_jK(x_i,x_j)$.

We consider now the function estimation problem for the nonlinear system in~\eqref{eq:nonlinear_pred}. Since every column in the data matrix ${\rm col}(U_{{\rm P}},\bar Y_{{\rm P}},U_{{\rm F}},\bar Y_{{\rm F}})$ is a trajectory of the system and thus an input/output sample for~\eqref{eq:nonlinear_pred}, for each $i \in [pN]$, we search for the best (in a regularized least-square sense) $f_i$ in an RKHS $\mathcal{H}$ with kernel $K$
\begin{equation}\label{eq:min_RKHS}
    \begin{array}{cl}
			\displaystyle \mathop {\min}\limits_{f_i \in \mathcal{H}}  &  \sum\limits_{j=1}^{H_c}(\bar Y_{{\rm F}[ij]} - f_i(x_j))^2 + \gamma \|f_i\|_{\mathcal{H}}^2
	\end{array}
\end{equation}
where $x_j = {\rm col}(U_{{\rm P}[:j]},\bar Y_{{\rm P}[:j]},U_{{\rm F}[:j]})$. Note that to simplify the forthcoming developments, the formulation in~\eqref{eq:min_RKHS} does not consider temporal correlation between different indices $i$ and causality (i.e., $y_t$ is not affected by $u_{t+1}$).
One can further consider causality by restricting the arguments of $f_i$ to be $(u_{\rm ini},y_{\rm ini},u_{[1:i]})$, which will lead to more complicated formulations than those in this paper, as different $f_i$ is defined over sets with different dimensions.

The cost function in~\eqref{eq:min_RKHS} includes the prediction errors and a regularization term, which is used to avoid over-fitting and penalize undesired behaviors. We note that since the functions in an RKHS can be parameterized as in~\eqref{eq:RKHS_f}, the representer theorem~\cite{scholkopf2001generalized} ensures that the minimization problem in~\eqref{eq:min_RKHS} can be solved in closed form. In our context, this leads to the following.

\begin{lemma}
    The minimizer of~\eqref{eq:min_RKHS} admits a closed-form prediction model for~\eqref{eq:nonlinear_pred} as
    \begin{equation}\label{eq:K_pred}
        y = \bar Y_{\rm F}(\mathbf{\bar K}+\gamma I)^{-1} k(u_{\rm ini},\bar y_{\rm ini},u) \,,
    \end{equation}
    where $\mathbf{\bar K} \in \mathbb{S}_{\ge 0}^{H_c \times H_c}$ is the Gram matrix such that  $\mathbf{\bar K}_{[ij]} = K(x_i,x_j)$, and $k(\cdot) = {\rm col}(K_{x_1}(\cdot),K_{x_2}(\cdot),\dots,K_{x_{H_c}}(\cdot))$.
\end{lemma}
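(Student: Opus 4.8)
The plan is to apply the representer theorem to each of the $pN$ scalar regressions in~\eqref{eq:min_RKHS} separately, reduce each one to a finite-dimensional ridge-type least-squares problem, solve it in closed form, and finally stack the $pN$ resulting scalar predictors into the matrix identity~\eqref{eq:K_pred}. Throughout, I fix an index $i\in[pN]$ and let $b_i\in\mathbb{R}^{H_c}$ denote the transpose of the $i$th row of $\bar Y_{\rm F}$, so that the $j$th entry of $b_i$ is exactly the scalar target $\bar Y_{{\rm F}[ij]}$ attached to the sample $x_j={\rm col}(U_{{\rm P}[:j]},\bar Y_{{\rm P}[:j]},U_{{\rm F}[:j]})$. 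First I would invoke the representer theorem~\cite{scholkopf2001generalized}: since the objective of~\eqref{eq:min_RKHS} depends on $f_i$ only through the finitely many evaluations $f_i(x_1),\dots,f_i(x_{H_c})$ and, otherwise, through a strictly increasing function of $\|f_i\|_{\mathcal H}$, every minimizer can be written in the finite form $f_i(\cdot)=\sum_{j=1}^{H_c}\alpha_{ij}K_{x_j}(\cdot)$ for some $\alpha_i={\rm col}(\alpha_{i1},\dots,\alpha_{iH_c})\in\mathbb{R}^{H_c}$. If a self-contained argument is preferred, one orthogonally decomposes $f_i=f_i^{\parallel}+f_i^{\perp}$ with $f_i^{\parallel}\in{\rm span}\{K_{x_1},\dots,K_{x_{H_c}}\}$; the reproducing property $f_i(x_j)=\langle f_i,K_{x_j}\rangle$ shows the fitting term is independent of $f_i^{\perp}$, whereas $\|f_i\|_{\mathcal H}^2=\|f_i^{\parallel}\|_{\mathcal H}^2+\|f_i^{\perp}\|_{\mathcal H}^2$ is minimized only at $f_i^{\perp}=0$.

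Next I would substitute this parametrization back into~\eqref{eq:min_RKHS}. Using the reproducing property together with $\mathbf{\bar K}_{[jl]}=K(x_j,x_l)$ gives $f_i(x_l)=\sum_{j}\alpha_{ij}K(x_j,x_l)=(\mathbf{\bar K}\alpha_i)_{[l]}$ and $\|f_i\|_{\mathcal H}^2=\alpha_i^{\top}\mathbf{\bar K}\alpha_i$, so~\eqref{eq:min_RKHS} reduces to the finite-dimensional convex program $\min_{\alpha_i\in\mathbb{R}^{H_c}}\|b_i-\mathbf{\bar K}\alpha_i\|^2+\gamma\,\alpha_i^{\top}\mathbf{\bar K}\alpha_i$. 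Its first-order optimality condition is $\mathbf{\bar K}\bigl((\mathbf{\bar K}+\gamma I)\alpha_i-b_i\bigr)=0$, which is solved by $\alpha_i=(\mathbf{\bar K}+\gamma I)^{-1}b_i$; the inverse exists because $\mathbf{\bar K}\succeq 0$ and $\gamma>0$, and convexity of the program makes this stationary point a global minimizer. Evaluating the optimal $f_i$ at a generic query point $x={\rm col}(u_{\rm ini},\bar y_{\rm ini},u)$ then yields $y_{[i]}=f_i(x)=\alpha_i^{\top}k(x)=b_i^{\top}(\mathbf{\bar K}+\gamma I)^{-1}k(x)$, using symmetry of $(\mathbf{\bar K}+\gamma I)^{-1}$. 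Recalling that $b_i^{\top}$ is the $i$th row of $\bar Y_{\rm F}$ and stacking these $pN$ scalar identities over $i\in[pN]$ produces $y=\bar Y_{\rm F}(\mathbf{\bar K}+\gamma I)^{-1}k(u_{\rm ini},\bar y_{\rm ini},u)$, which is precisely~\eqref{eq:K_pred}.

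I do not anticipate a genuine obstacle. The one step needing care is the representer-theorem reduction (equivalently, the orthogonal-decomposition argument above); a minor subtlety is that when $\mathbf{\bar K}$ is singular the coefficient vector $\alpha_i$ satisfying the first-order condition need not be unique, but every such solution represents the same optimal function on the data and hence gives the same closed-form predictor, so the statement is unaffected. Beyond that, the proof is routine linear algebra plus bookkeeping over the output indices $i\in[pN]$.
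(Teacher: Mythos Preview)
Your proposal is correct and follows exactly the approach the paper alludes to: the paper does not actually give a proof of this lemma, but simply invokes the representer theorem~\cite{scholkopf2001generalized} and states the closed form. Your argument---representer theorem (or the equivalent orthogonal decomposition), reduction to the finite-dimensional ridge problem $\min_{\alpha_i}\|b_i-\mathbf{\bar K}\alpha_i\|^2+\gamma\,\alpha_i^{\top}\mathbf{\bar K}\alpha_i$, solving the first-order condition, and stacking over $i\in[pN]$---is the standard derivation behind that citation, and your handling of the possible non-uniqueness of $\alpha_i$ when $\mathbf{\bar K}$ is singular is appropriate.
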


Eq.~\eqref{eq:K_pred} serves as a data-driven nonlinear explicit predictor which can be used to predict future behaviors of the system~\eqref{eq:nonlinear_pred} and embedded in predictive control algorithms. We note that the choice of kernel $K$ encodes the properties of the function space. Indeed, \eqref{eq:K_pred} shows that the synthesized functions are linear combinations of the kernel sections centered at the sampling points. Hence, one can choose the kernel $K$ based on a priori knowledge on the system. For example, if one knows that $f_i$ ($i \in [pN]$) are polynomials in inputs and initial data, then a polynomial kernel $K(x_i,x_j) = (x_i^\top x_j + c)^d$ can be used. When the functions are very high-order polynomials or other complex forms, a Gaussian kernel $K(x_i,x_j) = {\rm exp}\left( -\frac{\|x_i-x_j\|^2}{2\sigma^2} \right)$ can be used to span the set of continuous functions.

\section{DeePC with Kernel Methods}
\label{sec:KDeePC}
	
	\subsection{Review of the DeePC algorithm}
	
	The DeePC algorithm proposed in \cite{coulson2019data} directly uses input/output data collected from the unknown LTI system to predict the future behaviour, and performs optimal and safe control without identifying a parametric system representation. More specifically, DeePC solves the following optimization problem to obtain the optimal future control inputs
	\begin{equation} 			\label{eq:DeePC}
		\begin{array}{cl}
			\displaystyle \mathop {{\rm{min}}}\limits_{u \in \mathcal{U},y \in \mathcal{Y} \atop g }  \left\{ \ell(u) + {\left\| {y - r} \right\|_Q^2}  \, \Bigg|  \,\,   {\begin{bmatrix}
	{{U_{\rm P}}}\\
	{{\bar Y_{\rm P}}}\\
	{{U_{\rm F}}}\\
	{{\bar Y_{\rm F}}}
	\end{bmatrix}}g =  {\begin{bmatrix}
	{{u_{\rm ini}}}\\
	{{\bar y_{\rm ini}}}\\
	u\\
	y
	\end{bmatrix}}  \right\},
		\end{array}
	\end{equation}
where $\ell: \mathbb{R}^{mN} \to \mathbb{R}$ is the cost function of control actions, for instance, $\ell(u) = \|u\|_R$ or $\ell(u) = \|u\|_R^2$.
	The sets of input/output constraints are defined as $\mathcal U = \{ u \in \mathbb{R}^{mN} \ | \ W_u u  \le w_u \}$ and $\mathcal Y = \{ y \in \mathbb{R}^{pN} \ | \ W_y y  \le w_y \}$. The positive definite matrix~$R \in \mathbb{S}^{mN \times mN}_{>0}$ and positive semi-definite matrix~$Q \in \mathbb{S}^{pN \times pN}_{\ge 0}$ are the cost matrices. The vector $r \in \mathbb{R}^{pN}$ is a prescribed reference trajectory for the future outputs.
	DeePC involves solving the convex quadratic problem (\ref{eq:DeePC}) in a receding horizon fashion, that is, after calculating the optimal control sequence $u^\star$, we apply $(u_t,...,u_{t+k-1}) = (u_0^{\star},...,u_{k-1}^{\star})$ to the system for $k \le N$ time steps, then, reinitialize the problem (\ref{eq:DeePC}) by updating ${\rm col}(\bar u_{\rm ini},\bar y_{\rm ini})$ to the most recent input and output measurements, and setting $t$ to $t+k$, to calculate the new optimal control for the next $k \le N$ time steps. As in MPC, the control horizon $k$ is a design parameter.

	The standard DeePC algorithm in \eqref{eq:DeePC} assumes perfect output data $(\bar Y_{\rm P}, \bar Y_{\rm F}, \bar y_{\rm ini})$ generated from the unknown system \eqref{eq:ABCD}.
	However, in practice, perfect data is not accessible to the controller due to measurement noise.
	We note that regularized DeePC algorithms can be used to provide robustness against disturbances on the data, which often amount to one-norm or two-norm regularization of $g$ in the cost function of~\eqref{eq:DeePC} \cite{coulson2019regularized,huang2021robust}. In this paper, we assume no process noise, i.e., the input data $(U_{\rm P}, U_{\rm F}, u_{\rm ini})$ is perfectly known.
	
	{\textbf{Perfect Data and Noisy Data.}} Throughout the paper, we use $\bar Y_{\rm P}$, $\bar Y_{\rm F}$, and $\bar y_{\rm ini}$ to denote the perfect (noiseless) output data generated from the system, which accurately captures the system dynamics; we use $\hat Y_{\rm P}$, $\hat Y_{\rm F}$, and $\hat y_{\rm ini}$ to denote the corresponding noisy output data. We use $\mathbf{\bar K}$ to denote the Gram matrix calculated from perfect output data, and $\mathbf{\hat K}$ to denote the Gram matrix calculated from noisy output data.

	\subsection{RoKDeePC algorithm}
	
	The DeePC algorithm above implicitly incorporates a linear predictor. In this section, we develop a RoKDeePC algorithm using an implicit and robustified version of the kernel-based nonlinear predictor~\eqref{eq:K_pred} to perform data-driven optimal control for nonlinear systems. We first consider the following certainty-equivalence MPC problem using the kernel predictor in~\eqref{eq:K_pred} (referred to as kernel-based MPC in this paper)
	\begin{equation}\label{eq:KMPC}
\begin{array}{cl}
\mathop{\min}\limits_{u \in \mathcal{U},y \in \mathcal{Y}}& \ell(u) + \|y-r\|_Q\\
{\rm s.t.}& \eqref{eq:K_pred} \,.
\end{array}
    \end{equation}


    In what follows, we assume that when perfect data is available, the predictor~\eqref{eq:K_pred} can faithfully predict the future behavior of the nonlinear dynamics~\eqref{eq:nonlinear_pred}.

    \begin{assumption}\label{as:2}
When perfect data $(\bar Y_{\rm P}, \bar Y_{\rm F}, \bar y_{\rm ini})$ is available in the kernel-based nonlinear predictor~\eqref{eq:K_pred}, the prediction error is bounded, i.e.,
\begin{equation}\label{eq:y_sys}
\| y_{\rm sys} - y_{\rm prd} \|_Q \le \beta_e \,,
\end{equation}
where $y_{\rm prd} = \bar Y_{\rm F}(\mathbf{\bar K}+\gamma I)^{-1} k(u_{\rm ini},\bar y_{\rm ini},u_{\rm sys})$ is the predicted output trajectory based on~\eqref{eq:K_pred}, $u_{\rm sys}$ is a control sequence applied to the system, $y_{\rm sys}$ is the realized output trajectory of the system in response to $u_{\rm sys}$, and $\beta_e$ is the error bound.
\end{assumption}

    Assumption~\ref{as:2} implicitly requires that sufficiently rich data is used to capture the system dynamics \cite{maddalena2021deterministic}. Moreover, one can possibly use more data to obtain a more accurate prediction and thus reduce $\beta_e$. Since perfect data is available, this assumption can generally be satisfied if the unknown dynamics $f_i$ are contained in the RKHS of the chosen kernel $K$, e.g., $f_i$ is polynomial and a sufficiently high-order polynomial kernel is used. Otherwise, this assumption can be met by using kernels with the universal approximation property (e.g., a Gaussian kernel) and by implicitly assuming that $f_i$ is continuous and that ${\rm col}(u_{\rm ini},\bar y_{\rm ini},u_{\rm sys})$ is contained in a compact set~\cite{micchelli2006universal}.

    When noisy data is used, the prediction error may grow unbounded, and the obtained control sequence from~\eqref{eq:KMPC} may not result in a robust closed loop.
    As a first step toward addressing this problem, we rewrite~\eqref{eq:K_pred} implicitly as
    \begin{equation}\label{eq:K_predg}
    \begin{bmatrix}
      \mathbf{\bar K}+\gamma I\\
      \bar Y_{\rm F}
    \end{bmatrix}g = \begin{bmatrix}
                       k(u_{\rm ini},\bar y_{\rm ini},u) \\
                       y
                     \end{bmatrix} \,,
    \end{equation}
	where $g$ is uniquely determined by $(\mathbf{\bar K}+\gamma I)g = k(u_{\rm ini},\bar y_{\rm ini},u)$ since $(\mathbf{\bar K}+\gamma I)$ has full rank, and $y = \bar Y_{\rm F}g$ predicts the future behavior of the system. We note that, given ${\rm col}(u_{\rm ini},\bar y_{\rm ini},u)$, there exists a unique $g \in \mathbb{R}^{H_c}$ that satisfies~\eqref{eq:K_predg}. Unlike linear systems, however, given an arbitrary $g$, there may not exist a vector ${\rm col}(u_{\rm ini},\bar y_{\rm ini},u)$ that satisfies~\eqref{eq:K_predg}; for linear systems such a vector exists, thanks to the Fundamental Lemma. Substituting~\eqref{eq:K_predg} into~\eqref{eq:KMPC} leads to
\begin{equation}\label{eq:KMPC1}
\begin{array}{cl}
\mathop{\min}\limits_{u \in \mathcal{U},y \in \mathcal{Y},g}& \ell(u) + \|y-r\|_Q\\
{\rm s.t.}& \eqref{eq:K_predg} \,.
\end{array}
\end{equation}

The equality constraint~\eqref{eq:K_predg} assumes perfect data to obtain a satisfactory prediction. When perfect data is not available, one may need to relax this equality constraint and ensure feasibility, for instance, by penalizing its violation in the cost function (i.e., soft constraints)
\begin{equation}\label{eq:KMPC2}
\mathop{\min}\limits_{u \in \mathcal{U} \atop g: \bar Y_{\rm F}g \in \mathcal{Y}} \ell(u) + \|\bar Y_{\rm F}g -r\|_Q + \lambda_k \| (\mathbf{\bar K} + \gamma I)g - k(u_{\rm ini},\bar y_{\rm ini},u) \| \,,
\end{equation}
where $\lambda_k$ is a penalty parameter. We note that with a sufficiently large $\lambda_k$, the solution to~\eqref{eq:KMPC2} is equal to the solution to~\eqref{eq:KMPC1}~\cite{kerrigan2000soft}. However, a robust solution may still not be obtained when noisy output data is used.
Hence, we further robustify~\eqref{eq:KMPC2} and propose the following RoKDeePC formulation that leverages noisy output data for nonlinear, robust, and optimal control
\begin{equation}\label{eq:K_DeePC1}
\begin{array}{cl}
\mathop{\min}\limits_{u \in \mathcal{U}, g \in \mathcal{G} } & \hspace{-2mm} \mathop{\max}\limits_{[\Delta_K\;\Delta_k] \in \mathcal{D}_1 \atop \Delta_Y \in \mathcal{D}_2} \;\; \ell(u) + \|(\hat Y_{\rm F} + \Delta_Y)g - r\|_Q \vspace{1mm} \\
& \hspace{-1mm} + \lambda_k \| (\mathbf{\hat K} + \Delta_K+\gamma I)g - (k(u_{\rm ini},\hat y_{\rm ini},u) + \Delta_k) \| , \vspace{1mm}
\end{array}
\end{equation}
where $[\Delta_K \; \Delta_k]$ and $\Delta_Y$ are disturbances added to the data matrices, which respectively reside in prescribed disturbance sets $\mathcal{D}_1$ and $\mathcal{D}_2$, and $\mathcal{G} = \{ g \; | \; (\hat Y_{\rm F} + \Delta_Y)g \in \mathcal{Y}, \; \forall \Delta_Y \in \mathcal{D}_2 \}$ robustifies the output constraints.



The above min-max formulation considers the worst-case scenario of the uncertainties added to the data matrices, thereby leading to a robust solution. We will later show that performance guarantees can also be provided by employing this formulation.
Note that min-max optimization problems could be in general difficult to solve, however, \eqref{eq:K_DeePC1} admits a tractable reformulation as a minimization problem if appropriate disturbance sets $\mathcal{D}_1$ and $\mathcal{D}_2$ are considered. The following result shows that the inner max problem of~\eqref{eq:K_DeePC1} can be resolved analytically for appropriate disturbance sets $\mathcal{D}_1$ and $\mathcal{D}_2$, resulting in regularized problem formulations.

\begin{lemma}
Consider $\mathcal{D}_{F1} = \{ [\Delta_K\;\Delta_k] \; | \; \| [\Delta_K\;\Delta_k] \|_F \le \rho_1 \}$, and $\mathcal{D}_{F2} = \{ \Delta_Y \; | \; \| Q^{\frac{1}{2}}\Delta_Y \|_F \le \rho_2 \}$. Given a vector $u \in \mathcal{U}$ and a vector $g \in \mathcal{G}$, it holds that
\begin{equation}\label{eq:max1}
\begin{split}
& \mathop{\max}\limits_{[\Delta_K\;\Delta_k] \in \mathcal{D}_{F1}} \| (\mathbf{\hat K} + \Delta_K+\gamma I)g - (k(u_{\rm ini},\hat y_{\rm ini},u) + \Delta_k) \| \\
& =  \| (\mathbf{\hat K}+\gamma I)g - k(u_{\rm ini},\hat y_{\rm ini},u) \| + \rho_1 \sqrt{\|g\|^2+1} ,
\end{split}
\end{equation}
\begin{equation}\label{eq:max2}
\mathop{\max}\limits_{\Delta_Y \in \mathcal{D}_{F2}} \|(\hat Y_{\rm F} + \Delta_Y)g - r\|_Q = \|\hat Y_{\rm F}g - r\|_Q + \rho_2 \|g\|.
\end{equation}
\end{lemma}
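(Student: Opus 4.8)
The plan is to reduce each maximization to a standard fact: the maximum of a linear function over a Frobenius-norm ball is attained at a rank-one perturbation aligned with the residual, and the value increases by the product of the radius and the norm of the coefficient vector. I would treat the two identities separately, since they involve different weighting structures, but the mechanism is the same in both.

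For~\eqref{eq:max1}, first I would write the inner argument as $\|v - \Delta_k + \Delta_K g\|$ where $v := (\mathbf{\hat K}+\gamma I)g - k(u_{\rm ini},\hat y_{\rm ini},u)$ is the nominal residual. Collecting the perturbations, $\Delta_K g - \Delta_k = [\Delta_K\;\Delta_k]\begin{bmatrix} g \\ -1\end{bmatrix}$, so the term is $\|v + [\Delta_K\;\Delta_k]\, w\|$ with $w := {\rm col}(g,-1)$, hence $\|w\| = \sqrt{\|g\|^2+1}$. By the triangle inequality, $\|v + [\Delta_K\;\Delta_k]\,w\| \le \|v\| + \|[\Delta_K\;\Delta_k]\,w\| \le \|v\| + \|[\Delta_K\;\Delta_k]\|_F\,\|w\| \le \|v\| + \rho_1\sqrt{\|g\|^2+1}$, where I use $\|Aw\| \le \|A\|\|w\| \le \|A\|_F\|w\|$. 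For the reverse inequality I would exhibit a maximizer: take $[\Delta_K^\star\;\Delta_k^\star] = \frac{\rho_1}{\|v\|\,\|w\|}\, v\, w^\top$ (a rank-one matrix), which has Frobenius norm exactly $\rho_1$ and satisfies $[\Delta_K^\star\;\Delta_k^\star]\,w = \frac{\rho_1}{\|v\|} v$, so the argument becomes $\|v\|(1 + \rho_1/\|v\|) = \|v\| + \rho_1\|w\|$, matching the upper bound. The degenerate case $v = 0$ is handled by choosing any unit vector in place of $v/\|v\|$; the case $w=0$ cannot occur since the last entry is $-1$.

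For~\eqref{eq:max2}, I would factor out the weighting: $\|(\hat Y_{\rm F}+\Delta_Y)g - r\|_Q = \|Q^{1/2}(\hat Y_{\rm F}g - r) + Q^{1/2}\Delta_Y g\|$. Writing $s := Q^{1/2}(\hat Y_{\rm F}g - r)$ and noting $Q^{1/2}\Delta_Y g = (Q^{1/2}\Delta_Y)g$ with $\|Q^{1/2}\Delta_Y\|_F \le \rho_2$, the same argument gives the upper bound $\|s\| + \rho_2\|g\|$ via triangle inequality and $\|(Q^{1/2}\Delta_Y)g\| \le \|Q^{1/2}\Delta_Y\|_F\|g\|$, and the matching lower bound via the rank-one choice $Q^{1/2}\Delta_Y^\star = \frac{\rho_2}{\|s\|\,\|g\|}\, s\, g^\top$, i.e. $\Delta_Y^\star = \frac{\rho_2}{\|s\|\,\|g\|}\, Q^{-1/2} s\, g^\top$ (or any least-norm preimage if $Q$ is only positive semidefinite — one can instead just specify $Q^{1/2}\Delta_Y^\star$ directly, which is all that enters the cost). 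Then $\|s\| = \|\hat Y_{\rm F}g - r\|_Q$ by definition, giving the claim.

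The only real subtlety — hardly an obstacle — is bookkeeping around degenerate cases: when the nominal residual vanishes (so there is no preferred direction for the rank-one perturbation) and, in~\eqref{eq:max2}, when $Q$ is singular so $\Delta_Y^\star$ is not literally recovered by inverting $Q^{1/2}$. Both are dispatched by observing that the cost depends on $\Delta_Y$ only through $Q^{1/2}\Delta_Y$, so it suffices to construct an admissible $Q^{1/2}\Delta_Y$ of Frobenius norm $\rho_2$ pointing in an arbitrary unit direction when $s=0$; existence of some $\Delta_Y$ realizing it follows since $Q^{1/2}$ (with a pseudoinverse) maps onto its range and we may pad with zeros. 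I would state the rank-one-perturbation lemma once and invoke it twice rather than repeating the computation.
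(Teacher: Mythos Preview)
Your proposal is correct and is essentially the same argument as the paper's: an upper bound via the triangle inequality and $\|Aw\|\le\|A\|_F\|w\|$, and a matching lower bound via the rank-one perturbation $\rho\,\omega w^\top/\|w\|$ with $\omega$ the unit residual direction (the paper cites this as \cite{el1997robust}). Your treatment of the degenerate cases and of the $Q$-weighting in \eqref{eq:max2} is slightly more explicit than the paper's, which simply says ``by following a similar process,'' but the substance is identical.
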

\begin{proof}
The proof is adapted from~\cite[Theorem~3.1]{el1997robust}. Consider a fixed $g$ and a fixed $u$ in~\eqref{eq:max1}. It follows from triangle inequality that
		\begin{align*}
		   & \mathop{\max}\limits_{[\Delta_K\;\Delta_k] \in \mathcal{D}_{F1}} \| (\mathbf{\hat K} + \Delta_K+\gamma I)g - (k(u_{\rm ini},\hat y_{\rm ini},u) + \Delta_k) \| \\
	 \le  & \| (\mathbf{\hat K} +\gamma I)g - k(u_{\rm ini},\hat y_{\rm ini},u)  \| + \max_{[\Delta_K\;\Delta_k] \in \mathcal{D}_{F1}} \hspace{-1mm} \|  \Delta_K g  - \Delta_k \| \\
	    = &  \| (\mathbf{\hat K} +\gamma I)g - k(u_{\rm ini},\hat y_{\rm ini},u)  \| + \rho_1 \sqrt{ \|g \|^2 + 1}.
		\end{align*}
	Choose $\hat{\Delta} = [\hat{\Delta}_K \ \hat{\Delta}_k] \in \mathcal D_{F1}$ such that
	\[
	[\hat{\Delta}_K \ \hat{\Delta}_k]  = \frac{\rho_1 \omega}{\sqrt{ \|g \|^2 + 1}} [ g^\top \ -1 ], \quad \text{where}
	\]
	\[
	\omega = \begin{cases}
	 \frac{(\mathbf{\hat K} +\gamma I)g - k(u_{\rm ini},\hat y_{\rm ini},u)}{{\|(\mathbf{\hat K} +\gamma I)g - k(u_{\rm ini},\hat y_{\rm ini},u)\|}} &  \hspace{-2mm} \text{if } (\mathbf{\hat K} +\gamma I)g \ne k(u_{\rm ini},\hat y_{\rm ini},u) \\
	 \text{any unit-norm vector} & \hspace{-2mm} \text{otherwise}.
	\end{cases}
	\]
	Since $\hat \Delta$ is a rank-one matrix, we have $\|\hat{\Delta} \|_F = \|\hat{\Delta} \| = \rho_1 $, and
	\begin{align*}
	        & \| (\mathbf{\hat K} + \hat \Delta_K+\gamma I)g - (k(u_{\rm ini},\hat y_{\rm ini},u) + \hat \Delta_k) \| \\
	    = & \| (\mathbf{\hat K} +\gamma I)g - k(u_{\rm ini},\hat y_{\rm ini},u)  \| +  \left\| \hat{\Delta}_K g  - \hat{\Delta}_k \right\| \\
	    = & \| (\mathbf{\hat K} +\gamma I)g - k(u_{\rm ini},\hat y_{\rm ini},u) \| + \rho_1 \sqrt{ \|g \|^2 + 1} \\
		\le & \mathop{\max}\limits_{[\Delta_K\;\Delta_k] \in \mathcal{D}_{F1}} \| (\mathbf{\hat K} + \Delta_K+\gamma I)g - (k(u_{\rm ini},\hat y_{\rm ini},u) + \Delta_k) \|,
	\end{align*}
	where the first equality follows from triangle inequality and the fact that $((\mathbf{\hat K} +\gamma I)g - k(u_{\rm ini},\hat y_{\rm ini},u))$ is a nonnegative scalar multiple of $(\hat{\Delta}_K g  - \hat{\Delta}_k)$. Since the lower bound coincides with the upper bound on the maximization problem~\eqref{eq:max1} for any fixed $g$ and $u$, the equality in~\eqref{eq:max1} holds. By following a similar process, one can prove the equality in~\eqref{eq:max2}.
\end{proof}

\begin{corollary}[A tractable formulation for RoKDeePC]
By considering $\mathcal{D}_1 = \mathcal{D}_{F1}$ and $\mathcal{D}_2 = \mathcal{D}_{F2}$ in~\eqref{eq:K_DeePC1}, it holds that
\begin{equation}\label{eq:K_DeePC2}
\begin{split}
\eqref{eq:K_DeePC1} =  \mathop{\min}\limits_{u \in \mathcal{U}, g \in \mathcal{G} } \; & \ell(u) + \|\hat Y_{\rm F}g - r\|_Q + h(g) \\
& \hspace{-0.7mm} +\lambda_k\| (\mathbf{\hat K}+\gamma I)g - k(u_{\rm ini},\hat y_{\rm ini},u) \|  ,
\end{split}
\end{equation}
where $h(g) = \lambda_k\rho_1 \sqrt{\|g\|^2+1} + \rho_2 \|g\|$ is the regularizer. Moreover, the minimizer of~\eqref{eq:K_DeePC2} coincides with that of~\eqref{eq:K_DeePC1}.
\end{corollary}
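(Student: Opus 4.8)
\emph{Proof proposal.} The plan is to reduce the statement to the Lemma just proved by (i) separating the inner maximization over the product set $\mathcal{D}_{F1}\times\mathcal{D}_{F2}$ into two independent maximizations, (ii) evaluating each via~\eqref{eq:max1}--\eqref{eq:max2}, and (iii) passing to the outer minimization over the common feasible set. First I would fix an arbitrary feasible pair $(u,g)$ with $u\in\mathcal U$ and $g\in\mathcal G$. Note that, with $\mathcal D_2=\mathcal D_{F2}$, the set $\mathcal G=\{g \mid (\hat Y_{\rm F}+\Delta_Y)g\in\mathcal Y,\ \forall\,\Delta_Y\in\mathcal D_{F2}\}$ is exactly the robust output-constraint set appearing in both~\eqref{eq:K_DeePC1} and~\eqref{eq:K_DeePC2}, so the two problems share the same feasible set; this settles the feasibility matching needed for the claim about minimizers. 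The objective inside the $\max$ is $\ell(u)+\|(\hat Y_{\rm F}+\Delta_Y)g-r\|_Q+\lambda_k\|(\mathbf{\hat K}+\Delta_K+\gamma I)g-(k(u_{\rm ini},\hat y_{\rm ini},u)+\Delta_k)\|$, in which $\ell(u)$ does not involve the disturbances, the second summand depends only on $\Delta_Y$, and the third only on $[\Delta_K\;\Delta_k]$, while $\lambda_k\ge 0$. Because there is no coupling term, the maximum over $[\Delta_K\;\Delta_k]\in\mathcal D_{F1}$ and $\Delta_Y\in\mathcal D_{F2}$ equals $\ell(u)$ plus the maximum of the $\Delta_Y$-term plus $\lambda_k$ times the maximum of the $[\Delta_K\;\Delta_k]$-term.

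Next I would invoke the Lemma termwise. Equation~\eqref{eq:max2} gives $\max_{\Delta_Y\in\mathcal D_{F2}}\|(\hat Y_{\rm F}+\Delta_Y)g-r\|_Q=\|\hat Y_{\rm F}g-r\|_Q+\rho_2\|g\|$, and equation~\eqref{eq:max1} gives $\max_{[\Delta_K\;\Delta_k]\in\mathcal D_{F1}}\|(\mathbf{\hat K}+\Delta_K+\gamma I)g-(k(u_{\rm ini},\hat y_{\rm ini},u)+\Delta_k)\|=\|(\mathbf{\hat K}+\gamma I)g-k(u_{\rm ini},\hat y_{\rm ini},u)\|+\rho_1\sqrt{\|g\|^2+1}$. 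Substituting these into the separated expression shows that, for every feasible $(u,g)$, the inner worst-case value equals $\ell(u)+\|\hat Y_{\rm F}g-r\|_Q+\rho_2\|g\|+\lambda_k\|(\mathbf{\hat K}+\gamma I)g-k(u_{\rm ini},\hat y_{\rm ini},u)\|+\lambda_k\rho_1\sqrt{\|g\|^2+1}$, which is precisely $\ell(u)+\|\hat Y_{\rm F}g-r\|_Q+h(g)+\lambda_k\|(\mathbf{\hat K}+\gamma I)g-k(u_{\rm ini},\hat y_{\rm ini},u)\|$ with $h(g)=\lambda_k\rho_1\sqrt{\|g\|^2+1}+\rho_2\|g\|$.

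Finally, since the inner maximum of~\eqref{eq:K_DeePC1} has been identified, pointwise on the common feasible set $\{u\in\mathcal U,\ g\in\mathcal G\}$, with the objective of~\eqref{eq:K_DeePC2}, taking the infimum of both sides over that set yields the identity~\eqref{eq:K_DeePC2}, and the sets of minimizers therefore coincide. The only points requiring any care are (a) justifying that the product structure of the disturbance set together with $\lambda_k\ge 0$ lets the two maximizations be performed independently — immediate, since the two disturbance blocks appear in separate summands with no cross term and both are being maximized — and (b) confirming that fixing $\mathcal D_2=\mathcal D_{F2}$ makes $\mathcal G$ identical in~\eqref{eq:K_DeePC1} and~\eqref{eq:K_DeePC2}; both are routine. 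Since all the analytic content is already in the preceding Lemma, I do not expect a substantive obstacle; the work is purely in the bookkeeping of the separation and substitution steps.
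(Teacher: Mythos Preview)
Your proposal is correct and is exactly the argument the paper has in mind: the corollary is stated without proof because it follows immediately from the preceding Lemma by separating the inner maximization over the product set $\mathcal{D}_{F1}\times\mathcal{D}_{F2}$ and substituting~\eqref{eq:max1}--\eqref{eq:max2}. Your write-up simply makes explicit the routine bookkeeping (separation of summands, common feasible set $\mathcal{G}$) that the paper leaves to the reader.
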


%

The above result shows that one can get robust control inputs by simply solving a regularized minimization problem. The proposed RoKDeePC algorithm involves solving~\eqref{eq:K_DeePC2} in a receding horizon manner, similar to the DeePC algorithm. The initial trajectory $(u_{\rm ini},\hat y_{\rm ini})$ should be updated every time before solving~\eqref{eq:K_DeePC2}. We remark that $\mathcal{D}_{F1}$ and $\mathcal{D}_{F2}$ are unstructured uncertainties sets which may lead to conservativeness. For instance, $\mathbf{\hat K}$ is a symmetric matrix, but $\mathcal{D}_{F1}$ does not restrict $\Delta_K$ to be symmetric; $\hat Y_{\rm F}$ may be a Hankel data matrix, but a Hankel structure cannot be imposed on $\Delta_Y$ by considering $\mathcal{D}_{F2}$. As will be discussed in the simulation section, though the considered unstructured sets are conservative, they can generally lead to satisfactory performance.

Notice that the cost function in~\eqref{eq:K_DeePC2} is convex in $g$, but nonconvex in $u$. Hence, in general one may not be able to find the global minimizer. Given a feasible control sequence $u_{\rm sys} \in \mathcal{U}$ (e.g., a local minimizer), we define the minimum cost of~\eqref{eq:K_DeePC2} over $g$ as
\begin{equation}\label{eq:opt_cost_u}
\begin{split}
c_{\rm opt}(u_{\rm sys}) = \mathop{\min}\limits_{g \in \mathcal{G} }\; & \ell(u_{\rm sys}) + \|\hat Y_{\rm F}g - r\|_Q + h(g) \\
& \hspace{-0.7mm} +\lambda_k\| (\mathbf{\hat K}+\gamma I)g - k(u_{\rm ini},\hat y_{\rm ini},u_{\rm sys}) \| .
\end{split}
\end{equation}

\subsection{Performance guarantees of RoKDeePC}

In what follows, we show that the proposed RoKDeePC algorithm provides deterministic performance guarantees when applying a feasible optimal control sequence to the system, thanks to the min-max formulation in~\eqref{eq:K_DeePC1}. We begin by making the following assumption.


\begin{assumption}\label{as:1}
For any ${\rm col}(u_{\rm ini},\bar y_{\rm ini},u_{\rm sys})$ (contained in a compact set), there exists $[\Delta_K\;\Delta_k] \in \mathcal{D}_{F1}$ such that $\mathbf{\hat K} + \Delta_K = \mathbf{\bar K}$ and $k(u_{\rm ini},\hat y_{\rm ini},u_{\rm sys}) + \Delta_k = k(u_{\rm ini},\bar y_{\rm ini},u_{\rm sys})$; there exists $\Delta_Y \in \mathcal{D}_{F2}$ such that $\hat Y_{\rm F} + \Delta_Y = \bar Y_{\rm F}$.

\end{assumption}


Assumption~\ref{as:1} indicates that sufficiently large uncertainty sets $\mathcal{D}_{F1}$ and $\mathcal{D}_{F2}$ are used to cover the deviations of the data matrices induced by noise. This is necessary for~\eqref{eq:K_DeePC1} to provide a robust solution against the realized uncertainty in practice.
Since ${\rm col}(u_{\rm ini},\bar y_{\rm ini},u_{\rm sys})$ is contained in a compact set and $k(\cdot)$ is continuous (by choosing a continuous kernel), Assumption~\ref{as:1} can be easily satisfied when the output noise is bounded and sufficiently large disturbance sets $\mathcal{D}_{F1}$ and $\mathcal{D}_{F2}$ are considered, which corresponds to choosing sufficiently large regularization parameters $\rho_1$ and $\rho_2$ in~\eqref{eq:K_DeePC2}.
We define the realized cost of the system as $c_{\rm realized}(u_{\rm sys}) = \ell(u_{\rm sys}) + \|y_{\rm sys}-r\|_Q$ (with $y_{\rm sys}$ defined in Assumption~\ref{as:2}). The following result shows that the realized cost is upper-bounded by the optimization cost in~\eqref{eq:opt_cost_u} plus the prediction error in~\eqref{eq:y_sys}.

\begin{theorem}\label{thm:perf}
If Assumptions~\ref{as:2} and \ref{as:1} hold, then there exists a sufficiently large $\lambda_k$ for~\eqref{eq:opt_cost_u} such that
\begin{equation}\label{eq:perf_ineq}
c_{\rm realized}(u_{\rm sys}) \le c_{\rm opt}(u_{\rm sys}) + \beta_e, \; {\rm for\;any\;feasible\;} u_{\rm sys} \in \mathcal{U} \,,
\end{equation}
where $\beta_e$ is the prediction error bound in Assumption~\ref{as:2}.
\end{theorem}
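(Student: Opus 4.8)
The plan is to combine the tractable reformulation established above (equivalently~\eqref{eq:K_DeePC2}) with the max-value identities~\eqref{eq:max1}--\eqref{eq:max2}, so that $c_{\rm opt}(u_{\rm sys})$ in~\eqref{eq:opt_cost_u} can be read back as the value of the inner \emph{min--max} problem~\eqref{eq:K_DeePC1} with $u$ fixed to $u_{\rm sys}$, and then to lower-bound that value by evaluating the worst-case maximization at the \emph{specific} perturbation supplied by Assumption~\ref{as:1}. Concretely, let $g^\star$ attain the minimum defining $c_{\rm opt}(u_{\rm sys})$. By~\eqref{eq:max1}--\eqref{eq:max2}, $c_{\rm opt}(u_{\rm sys})$ equals $\max_{[\Delta_K\,\Delta_k]\in\mathcal{D}_{F1},\,\Delta_Y\in\mathcal{D}_{F2}} \big[ \ell(u_{\rm sys}) + \|(\hat Y_{\rm F}+\Delta_Y)g^\star - r\|_Q + \lambda_k\|(\mathbf{\hat K}+\Delta_K+\gamma I)g^\star - (k(u_{\rm ini},\hat y_{\rm ini},u_{\rm sys})+\Delta_k)\| \big]$, and taking the particular $[\Delta_K^\star\,\Delta_k^\star],\Delta_Y^\star$ from Assumption~\ref{as:1}, for which $\mathbf{\hat K}+\Delta_K^\star=\mathbf{\bar K}$, $k(u_{\rm ini},\hat y_{\rm ini},u_{\rm sys})+\Delta_k^\star=k(u_{\rm ini},\bar y_{\rm ini},u_{\rm sys})$, and $\hat Y_{\rm F}+\Delta_Y^\star=\bar Y_{\rm F}$, gives
\[
c_{\rm opt}(u_{\rm sys}) \ \ge\ \ell(u_{\rm sys}) + \|\bar Y_{\rm F}g^\star - r\|_Q + \lambda_k\big\|(\mathbf{\bar K}+\gamma I)g^\star - k(u_{\rm ini},\bar y_{\rm ini},u_{\rm sys})\big\|\,.
\]

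Next I would connect the right-hand side to the perfect-data prediction $y_{\rm prd}=\bar Y_{\rm F}(\mathbf{\bar K}+\gamma I)^{-1}k(u_{\rm ini},\bar y_{\rm ini},u_{\rm sys})$ of Assumption~\ref{as:2}. Set $e := (\mathbf{\bar K}+\gamma I)g^\star - k(u_{\rm ini},\bar y_{\rm ini},u_{\rm sys})$; since $\mathbf{\bar K}+\gamma I$ is invertible, $\bar Y_{\rm F}g^\star = y_{\rm prd} + \bar Y_{\rm F}(\mathbf{\bar K}+\gamma I)^{-1}e$, so the triangle inequality yields $\|\bar Y_{\rm F}g^\star - r\|_Q \ge \|y_{\rm prd}-r\|_Q - C\|e\|$ with $C := \|Q^{\frac{1}{2}}\bar Y_{\rm F}(\mathbf{\bar K}+\gamma I)^{-1}\| \le \|Q^{\frac{1}{2}}\bar Y_{\rm F}\|/\gamma$ a finite constant (determined by the fixed, though unknown, perfect data and by $Q,\gamma$, using $\mathbf{\bar K}\succeq 0$). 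Substituting into the displayed bound,
\[
c_{\rm opt}(u_{\rm sys}) \ \ge\ \ell(u_{\rm sys}) + \|y_{\rm prd}-r\|_Q + (\lambda_k - C)\|e\|\,,
\]
so any $\lambda_k \ge C$ makes the last term nonnegative and gives $c_{\rm opt}(u_{\rm sys}) \ge \ell(u_{\rm sys}) + \|y_{\rm prd}-r\|_Q$. To finish, Assumption~\ref{as:2} with the triangle inequality gives $\|y_{\rm sys}-r\|_Q \le \|y_{\rm sys}-y_{\rm prd}\|_Q + \|y_{\rm prd}-r\|_Q \le \beta_e + \|y_{\rm prd}-r\|_Q$; adding $\ell(u_{\rm sys})$ to both sides and combining with the previous inequality produces $c_{\rm realized}(u_{\rm sys}) \le \ell(u_{\rm sys}) + \|y_{\rm prd}-r\|_Q + \beta_e \le c_{\rm opt}(u_{\rm sys}) + \beta_e$.

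I expect the middle step to be the crux. The minimizer $g^\star$ of the robustified problem is generally \emph{not} the ``ideal'' coefficient vector $\bar g = (\mathbf{\bar K}+\gamma I)^{-1}k(u_{\rm ini},\bar y_{\rm ini},u_{\rm sys})$ that exactly reproduces $y_{\rm prd}$, so $\|\bar Y_{\rm F}g^\star-r\|_Q$ is only a surrogate for $\|y_{\rm prd}-r\|_Q$; the key observation is that the gap is governed by the residual norm $\|e\|$, which is precisely the quantity multiplied by $\lambda_k$ inside $c_{\rm opt}$, so a large enough penalty absorbs it — this is exactly what the ``sufficiently large $\lambda_k$'' hypothesis buys, and it is why $C$ need only \emph{exist} rather than be known to the controller. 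Everything else — the sup-over-$\Delta$ lower bound and two applications of the triangle inequality — is routine, and $g\in\mathcal{G}$ never enters beyond the fact that $g^\star$ is a well-defined minimizer.
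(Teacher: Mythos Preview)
Your proposal is correct and follows essentially the same route as the paper's proof: both invoke Assumption~\ref{as:1} inside the min--max to replace noisy data by perfect data at the optimal $g^\star$, then introduce the residual $e=(\mathbf{\bar K}+\gamma I)g^\star-k(u_{\rm ini},\bar y_{\rm ini},u_{\rm sys})$ (the paper calls it $-\epsilon$), write $\bar Y_{\rm F}g^\star=y_{\rm prd}+\bar Y_{\rm F}(\mathbf{\bar K}+\gamma I)^{-1}e$, apply the reverse triangle inequality, and absorb the slack with a large enough $\lambda_k$; the paper phrases the threshold as $\lambda_k^2 I\succeq(\mathbf{\bar K}+\gamma I)^{-1}\bar Y_{\rm F}^\top Q\,\bar Y_{\rm F}(\mathbf{\bar K}+\gamma I)^{-1}$, which is exactly your condition $\lambda_k\ge C=\|Q^{1/2}\bar Y_{\rm F}(\mathbf{\bar K}+\gamma I)^{-1}\|$. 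Your additional crude bound $C\le\|Q^{1/2}\bar Y_{\rm F}\|/\gamma$ is a harmless extra that the paper does not state.
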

\begin{proof}
According to the definition of $c_{\rm opt}(u_{\rm sys})$, if Assumption~\ref{as:1} holds, we have
\begin{equation}\label{eq:ineq1}
\begin{split}
c_{\rm opt}(u_{\rm sys}) & =
\mathop{\min}\limits_{g \in \mathcal{G} }  \mathop{\max}\limits_{[\Delta_K\;\Delta_k] \in \mathcal{D}_1 \atop \Delta_Y \in \mathcal{D}_2} \hspace{-1mm}  \ell(u_{\rm sys}) + \|(\hat Y_{\rm F} + \Delta_Y)g - r\|_Q \vspace{1mm} \\
& \hspace{-3mm} + \lambda_k \| (\mathbf{\hat K} + \Delta_K+\gamma I)g - (k(u_{\rm ini},\hat y_{\rm ini},u_{\rm sys}) + \Delta_k) \| \\
& \ge \ell(u_{\rm sys}) + \|\bar Y_{\rm F}g^\star - r\|_Q \\
& \hspace{3.8mm} +\lambda_k\| (\mathbf{\bar K}+\gamma I)g^\star - k(u_{\rm ini},\bar y_{\rm ini},u_{\rm sys}) \| ,
\end{split}
\end{equation}
where $g^\star$ minimizes~\eqref{eq:K_DeePC1} in which $u = u_{\rm sys}$.
Given a vector $u_{\rm sys}$, it follows from~\eqref{eq:K_predg} that there exists a $\bar g$ such that
\begin{equation}\label{eq:eq1}
\begin{bmatrix}
      \mathbf{\bar K}+\gamma I\\
      \bar Y_{\rm F}
    \end{bmatrix}\bar g = \begin{bmatrix}
                       k(u_{\rm ini},\bar y_{\rm ini},u_{\rm sys}) \\
                       y_{\rm prd}
                     \end{bmatrix} ,
\end{equation}
where $y_{\rm prd}$ is the predicted output trajectory from the kernel-based nonlinear predictor.
By defining $\Delta_g = \bar g - g^\star$, we have
\begin{equation}\label{eq:eq2}
\begin{bmatrix}
      \mathbf{\bar K}+\gamma I\\
      \bar Y_{\rm F}
    \end{bmatrix}\Delta_g = \begin{bmatrix}
    \underbrace{k(u_{\rm ini},\bar y_{\rm ini},u_{\rm sys}) - (\mathbf{\bar K}+\gamma I)g^\star}_{\epsilon} \\
                       y_{\rm prd} - \bar Y_{\rm F}g^\star
                     \end{bmatrix} ,
\end{equation}
which implies that
\begin{equation}\label{eq:eq3}
\bar Y_{\rm F} \Delta_g = \bar Y_{\rm F} (\mathbf{\bar K}+\gamma I)^{-1}\epsilon = y_{\rm prd} - \bar Y_{\rm F}g^\star .
\end{equation}
By substituting~\eqref{eq:eq2} and \eqref{eq:eq3} into~\eqref{eq:ineq1} we obtain
\begin{equation}\label{eq:ineq2}
\begin{split}
c_{\rm opt}(u_{\rm sys}) & \ge \ell(u_{\rm sys}) + \|y_{\rm prd} - r - \bar Y_{\rm F}(\mathbf{\bar K}+\gamma I)^{-1}\epsilon\|_Q \\
& \hspace{4mm} + \lambda_k \| \epsilon \| \\
& \ge \ell(u_{\rm sys}) + \|y_{\rm prd} - r \|_Q \\
& \hspace{4mm} - \|\bar Y_{\rm F}(\mathbf{\bar K}+\gamma I)^{-1}\epsilon\|_Q + \lambda_k \| \epsilon \| \\
& \ge \ell(u_{\rm sys}) + \|y_{\rm prd} - r \|_Q \\
& \ge \ell(u_{\rm sys}) + \|y_{\rm sys} - r \|_Q - \|y_{\rm sys} - y_{\rm prd}\|_Q\\ & \ge c_{\rm realized}(u_{\rm sys}) - \beta_e,
\end{split}
\end{equation}
where the second and the forth inequalities hold thanks to the reverse triangle inequality, the third inequality is satisfied if we take $\lambda_k$ large enough (by assumption) to ensure
\begin{equation*}
\begin{split}
& \lambda_k^2 I \succeq (\mathbf{\bar K}+\gamma I)^{-1}\bar Y_{\rm F}^\top Q \bar Y_{\rm F} (\mathbf{\bar K}+\gamma I)^{-1} \,.  \\
\end{split}
\end{equation*}
Hence, $\lambda_k \| \epsilon \| \ge \|(\mathbf{\bar K}+\gamma I)^{-1}\epsilon\|_Q$, and the fifth inequality follows from Assumption~\ref{as:2} and the definition of $c_{\rm realized}(u_{\rm sys})$.
This completes the proof.
\end{proof}

The above result links the optimization cost to the realized cost, namely, the realized cost is always bounded by the optimization cost plus the prediction error even when noisy data is used. This also justifies the design of the cost function of the RoKDeePC in~\eqref{eq:K_DeePC2}, and shows that one should try to find not only a feasible solution but ideally a minimizer (or better the global minimizer) to reduce the optimization cost and possibly the realized cost.

\section{Gradient Descent Methods to Solve RoKDeePC}
\label{sec:GD}

The RoKDeePC problem in~\eqref{eq:K_DeePC2} is nonconvex in $u$ unless the kernel function is linear in $u$. One may use a generic nonlinear optimization toolbox (e.g., IPOPT) to solve this problem. However, solving a nonconvex problem is in general time-consuming, especially considering that the decision variable $g$ in~\eqref{eq:K_DeePC2} can be high-dimensional. We propose customized gradient descent methods exploiting the structure of~\eqref{eq:K_DeePC2} to solve the RoKDeePC problem more efficiently. Although a global minimum may still not be reached, the numerical study in Section~V suggests that these methods generally lead to satisfactory performance with a reasonable running time.

\subsection{Gradient descent algorithm for RoKDeePC}

Notice that the optimization problem in~\eqref{eq:K_DeePC2} is nonconvex in $u$ but convex in $g$. Moreover, the dimension of $g$ is generally much higher than that of $u$. Due to this favorable structure, we apply a gradient descent method to update $u$, while at each iteration, we solve a convex optimization problem to update $g$. Note that the input/output constraints can be handled using projected gradient algorithms.
We summarize the algorithm for solving RoKDeePC below.

\begin{algorithm}[h]
\caption{\normalsize A projected gradient algorithm to solve~\eqref{eq:K_DeePC2}} \label{Algorithm1}

{\bf{Input:}} cost function:
\begin{equation}\label{eq:cost_SOCP}
\begin{split}
c(u,g) = & \; \ell(u) + \|\hat Y_{\rm F}g - r\|_Q + h(g) \\
&+ \lambda_k\| (\mathbf{\hat K}+\gamma I)g - k(u_{\rm ini},\hat y_{\rm ini},u) \|\,,
\end{split}
\end{equation}
initial vectors $u=u_{(0)}=0$, $g=g_{(0)}=0$; constraint sets $\mathcal{U}$, $\mathcal{G}$; step size $\alpha$; maximal iteration number $i_{\max}$; convergence threshold: $\xi$.

\vspace{2mm}
\hspace{2.5mm} {\bf Initialization:} $i = 0$.

\hspace{2.5mm} {\bf Iterate} until $|c(u_{(i)},g_{(i)}) - c(u_{(i-1)},g_{(i-1)}) | < \xi$ or $i \ge i_{\max}$:
\begin{enumerate}
	\item $ i \leftarrow i+1 $
    \item $ u_{(i)} = {\rm arg} \mathop {\rm min} \limits_{u \in \mathcal{U}} \Big\|u - \Big( u_{(i-1)} - \alpha \frac{\partial c(u,g)}{\partial u} \Big|_{ \begin{subarray}{l}
        u = u_{(i-1)} \\ \vspace{2mm}
                 g = g_{(i-1)}
                 \end{subarray} } \Big) \Big\| $
    \item $g_{(i)} = {\rm arg} \mathop {\rm min} \limits_{g \in \mathcal{G}} c(u_{(i)},g) $
\end{enumerate}
\vspace{2mm}

{\bf{Output:}} (sub)optimal control sequence $u^\star = u_{(i)}$.

\end{algorithm}

In Step~2, the projection of $u$ into the set $\mathcal{U}$ admits a closed-form solution if upper and lower bounds for the elements in $u$ are considered.
Step~3 requires solving a second-order cone program (SOCP), which could make the algorithm time-consuming as one may need to solve the SOCP $i_{\max}$ times.
To reduce the computational burden, in what follows, we consider an equivalent cost function which is quadratic in $g$ and thus admits a closed-form solution for Step~3 when the output constraints are inactive.

\subsection{Quadratic reformulation}

To enable a fast calculation of Step~3, we consider the following cost function that is quadratic in $g$
\begin{equation}\label{eq:quad_cost}
\begin{split}
c_q(u,g) =& \; \ell(u) + \|\hat Y_{\rm F}g - r\|_Q^2 + \lambda_g \|g\|^2 \\
&+ \lambda_k'\| (\mathbf{\hat K}+\gamma I)g - k(u_{\rm ini},\hat y_{\rm ini},u) \|^2 \,. \\
\end{split}
\end{equation}
%
The following result shows the connection between the cost functions in~\eqref{eq:cost_SOCP} and~\eqref{eq:quad_cost}.

\begin{proposition}\label{Prop1}
For any $u$, if $g^\star \in \mathbb{R}^{H_c}$ is a minimizer of
\begin{equation}\label{eq:min_cq}
  \mathop{\min} \limits_{g \in \mathcal{G}}\; c_q(u,g)\,,
\end{equation}
then $g^\star$ also minimizes
\begin{equation}\label{eq:min_c}
  \mathop{\min} \limits_{g \in \mathcal{G}}\; c(u,g)\,,
\end{equation}
with
\begin{equation}\label{eq:lambda_k_eq}
\hspace{0mm}
\lambda_k \hspace{-0.5mm} = \hspace{-0.5mm} \left\{ \hspace{-2mm}
\begin{array}{ll}
\frac{\lambda_k' \| (\mathbf{\hat K}+\gamma I)g^\star - k(u_{\rm ini},\hat y_{\rm ini},u) \| }{\|\hat Y_{\rm F}g^\star - r\|_Q} & \hspace{-4mm} {\rm if} \; \hat Y_{\rm F}g^\star \ne r \\
\lambda_k' \| (\mathbf{\hat K}+\gamma I)g^\star - k(u_{\rm ini},\hat y_{\rm ini},u) \| & \hspace{-2mm} {\rm otherwise,}
\end{array}
\right.
\end{equation}
\begin{equation}\label{eq:lambda_g_eq}
\frac{\lambda_k \rho_1 \|g^\star\|}{\sqrt{\|g^\star\|^2+1}} + \rho_2 = \left\{ \hspace{-2mm}
\begin{array}{ll}
\frac{\lambda_g \|g^\star\|}{\|\hat Y_{\rm F}g^\star - r\|_Q} & {\rm if} \; \hat Y_{\rm F}g^\star \ne r \\
\lambda_g \|g^\star\| & {\rm otherwise.}
\end{array}
\right.
\end{equation}

\noindent
Moreover, $\lambda_k$ in~\eqref{eq:lambda_k_eq} is monotonically increasing in the $\lambda_k'$ chosen in~\eqref{eq:quad_cost}; if $\rho_2$ ($\rho_1$) remains constant, $\rho_1$ ($\rho_2$) in~\eqref{eq:lambda_g_eq} is monotonically increasing in the $\lambda_g$ chosen in~\eqref{eq:quad_cost}.
\end{proposition}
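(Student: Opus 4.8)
The plan is to compare the first–order optimality conditions of \eqref{eq:min_cq} and \eqref{eq:min_c} over the common convex set $\mathcal{G}$ (which is convex, since $\mathcal{Y}$ is polyhedral and $\mathcal{G}$ is an intersection of preimages of $\mathcal{Y}$ under linear maps). Fix $u$, so that $\ell(u)$ is irrelevant, and write $k:=k(u_{\rm ini},\hat y_{\rm ini},u)$ for brevity. Then $c_q(u,\cdot)$ is convex and \emph{differentiable} in $g$ (a sum of convex quadratics), whereas $c(u,\cdot)$ is convex but nonsmooth (because of the norm terms and of $h$ from \eqref{eq:K_DeePC2}). By the variational characterization of minimizers over a convex set, $g^\star$ solves \eqref{eq:min_cq} iff $\langle\nabla_g c_q(u,g^\star),\,g-g^\star\rangle\ge 0$ for every $g\in\mathcal{G}$, and $g^\star$ solves \eqref{eq:min_c} whenever there exists \emph{one} subgradient $s_0\in\partial_g c(u,g^\star)$ with $\langle s_0,\,g-g^\star\rangle\ge 0$ for every $g\in\mathcal{G}$. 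Hence it suffices to show that, under the parameter relations \eqref{eq:lambda_k_eq}--\eqref{eq:lambda_g_eq}, the subdifferential $\partial_g c(u,g^\star)$ contains a strictly positive multiple $\mu\,\nabla_g c_q(u,g^\star)$; the inequality then transfers directly from $c_q$ to $c$.

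The core step is the generic case $\hat Y_{\rm F}g^\star\neq r$, $(\mathbf{\hat K}+\gamma I)g^\star\neq k$, and $g^\star\neq 0$, in which every nonsmooth term of $c(u,\cdot)$ is differentiable at $g^\star$ and $\partial_g c(u,g^\star)$ is a singleton. Here $\nabla_g c_q(u,g^\star)$ is the sum of $2\hat Y_{\rm F}^\top Q(\hat Y_{\rm F}g^\star-r)$, $2\lambda_g g^\star$, and $2\lambda_k'(\mathbf{\hat K}+\gamma I)\big((\mathbf{\hat K}+\gamma I)g^\star-k\big)$, while the gradient of $c(u,\cdot)$ at $g^\star$ has the \emph{same three vectors} but with coefficients $1/\|\hat Y_{\rm F}g^\star-r\|_Q$, $\big(\lambda_k\rho_1/\sqrt{\|g^\star\|^2+1}+\rho_2/\|g^\star\|\big)$, and $\lambda_k/\|(\mathbf{\hat K}+\gamma I)g^\star-k\|$, respectively. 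Equating the two vectors term by term: the least–squares–residual term forces $\mu=1/\big(2\|\hat Y_{\rm F}g^\star-r\|_Q\big)$; the $(\mathbf{\hat K}+\gamma I)(\cdot)$ term then reproduces exactly the first branch of \eqref{eq:lambda_k_eq}; and the $g^\star$ term, after multiplying through by $\|g^\star\|$, reproduces exactly the first branch of \eqref{eq:lambda_g_eq}. One may fix $\rho_2\ge 0$ and solve \eqref{eq:lambda_g_eq} for $\rho_1\ge 0$ (or vice versa), which is feasible for $\lambda_g$ large enough.

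Next I would dispatch the degenerate cases, which yield the ``otherwise'' branches. If $\hat Y_{\rm F}g^\star= r$, then $\nabla_g\|\hat Y_{\rm F}g-r\|_Q^2$ vanishes at $g^\star$ whereas $\partial_g\|\hat Y_{\rm F}g-r\|_Q$ there is a norm ball containing $0$; choosing that zero subgradient and the scaling $\mu=1/2$, the identical coefficient matching now reproduces the ``otherwise'' branches of \eqref{eq:lambda_k_eq}--\eqref{eq:lambda_g_eq}. The cases $(\mathbf{\hat K}+\gamma I)g^\star=k$ (forcing $\lambda_k=0$) and $g^\star=0$ (where $\sqrt{\|g\|^2+1}$ has zero gradient and $\partial_g\|g\|$ is the unit ball, so $0$ is again available) are treated the same way, and one verifies that the conventions already written into \eqref{eq:lambda_k_eq}--\eqref{eq:lambda_g_eq} are precisely what these subgradient selections demand; I would also check that the resulting $\rho_1,\rho_2$ can be kept nonnegative.

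Finally, the monotonicity. With $g^\star$ — hence $\|\hat Y_{\rm F}g^\star-r\|_Q$, $\|(\mathbf{\hat K}+\gamma I)g^\star-k\|$ and $\|g^\star\|$ — held fixed, \eqref{eq:lambda_k_eq} is affine and increasing in $\lambda_k'$, and solving \eqref{eq:lambda_g_eq} for $\rho_1$ with $\rho_2$ constant, or for $\rho_2$ with $\rho_1$ constant, yields an affine strictly increasing function of $\lambda_g$. To account for the implicit dependence $g^\star=g^\star(\lambda_k',\lambda_g)$, I would invoke the standard comparison argument between the $c_q$–minimizers at two parameter values — adding the two optimal–value inequalities shows, for instance, that $\|(\mathbf{\hat K}+\gamma I)g^\star-k\|$ is nonincreasing in $\lambda_k'$ — and then conclude monotonicity of the composed maps. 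I expect the main obstacle to be exactly this last part: identifying the single multiplier $\mu$ that simultaneously aligns all three subgradient directions, handling the nonsmooth corner cases cleanly, and controlling the sensitivity of $g^\star$ to the tuning parameters — rather than any deep inequality.
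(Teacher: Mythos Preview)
Your first-order optimality approach is essentially the paper's: it writes out the KKT conditions of \eqref{eq:min_cq} and \eqref{eq:min_c} with explicit Lagrange multipliers for the constraint $g\in\mathcal{G}$, then verifies that rescaling the dual vector by $\mu=1/(2\|\hat Y_{\rm F}g^\star-r\|_Q)$ (or $\mu=1/2$ in the degenerate branch) turns a KKT point of the quadratic problem into one of the norm problem, with exactly the parameter relations \eqref{eq:lambda_k_eq}--\eqref{eq:lambda_g_eq}. Your variational-inequality phrasing (``find a subgradient of $c$ that is a positive multiple of $\nabla_g c_q$, then use that the normal cone is a cone'') is the same argument without naming the multipliers, and your treatment of the nonsmooth cases is correct.

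The monotonicity part, however, has a genuine gap. You propose to (i) observe that \eqref{eq:lambda_k_eq} is increasing in $\lambda_k'$ with $g^\star$ frozen, (ii) use the swap inequality on $c_q$ to show $\|(\mathbf{\hat K}+\gamma I)g^\star-k\|$ is nonincreasing in $\lambda_k'$, and then (iii) ``conclude monotonicity of the composed maps.'' Step (iii) does not follow: in the formula $\lambda_k=\lambda_k'\,\|Vg^\star-v\|/\|\hat Y_{\rm F}g^\star-r\|_Q$ the first factor increases, the second decreases, and the denominator moves as well, so no conclusion is available from (i)+(ii) alone. The paper's device is to run the swap argument a \emph{second} time, now on $c$ rather than $c_q$: since $g_1$ and $g_2$ are, by the first part of the proposition, also minimizers of \eqref{eq:min_c} for the induced parameters $\lambda_{k1}$ and $\lambda_{k2}$, comparing $c(\cdot;\lambda_{k1})$ at $g_1,g_2$ and $c(\cdot;\lambda_{k2})$ at $g_1,g_2$ and subtracting yields
\[
(\lambda_{k1}-\lambda_{k2})\big(\|Vg_1-v\|-\|Vg_2-v\|\big)<0,
\]
and since the second factor is positive by the $c_q$-swap, $\lambda_{k1}<\lambda_{k2}$ follows directly. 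The same double-swap handles the $\lambda_g\leftrightarrow(\rho_1,\rho_2)$ monotonicity. This is the missing idea in your outline.
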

\begin{proof}
See Appendix~\ref{proof_prop1}.
\end{proof}

The above result shows that using the cost function $c_q(u,g)$ in the RoKDeePC is equivalent to setting different parameters ($\lambda_k$, $\rho_1$, and $\rho_2$) to the problem in~\eqref{eq:K_DeePC2} every time solving it during the receding horizon implementation, as the equivalent values of $\lambda_k$, $\rho_1$, and $\rho_2$ are related to $g^\star$. This may lead to conservativeness as we need to choose a robust pair of $(\lambda_g,\lambda_k')$. However, we generally observe excellent performance when using $c_q(u,g)$ as the cost function for RoKDeePC. During the online implementation, after solving~\eqref{eq:min_cq}, one can compute $\lambda_k$, $\rho_1$, and $\rho_2$ (by fixing a desired value of $\rho_1$ or $\rho_2$ and then compute the other) according to~\eqref{eq:lambda_k_eq} and~\eqref{eq:lambda_g_eq}. If the obtained $\lambda_k$, $\rho_1$, and $\rho_2$ are not sufficiently large to satisfy Assumption~\ref{as:1} and Theorem~\ref{thm:perf}, then $\lambda_k'$ and $\lambda_g$ should be increased to implicitly increase $\lambda_k$, $\rho_1$, and $\rho_2$.

Moreover, when the output constraints are inactive, \eqref{eq:min_cq} admits a closed-form solution as
\begin{equation}\label{eq:g_closed_form}
  g^\star = M_r r + M_k k(u_{\rm ini},\hat y_{\rm ini},u) \,,
\end{equation}
where $M_r = [ \hat Y_{\rm F}^\top Q \hat Y_{\rm F} + \lambda_g I + \lambda_k' (\mathbf{\hat K}+\gamma I)^2 ]^{-1} \hat Y_{\rm F}^\top Q$, and $M_k = [ \hat Y_{\rm F}^\top Q \hat Y_{\rm F} + \lambda_g I + \lambda_k' (\mathbf{\hat K}+\gamma I)^2 ]^{-1} \lambda_k' (\mathbf{\hat K}+\gamma I)$. Notice that~\eqref{eq:g_closed_form} requires simply a linear mapping to obtain $g^\star$, which can greatly accelerate the gradient descent algorithm when the output constraints are inactive.
In summary, we consider the following modified RoKDeePC optimization problem
\begin{equation}\label{eq:K_DeePC3}
  \mathop{\min} \limits_{u \in \mathcal{U}, g \in \mathcal{G}}\; c_q(u,g)\,,
\end{equation}
which is related to the original formulation in~\eqref{eq:K_DeePC2} via Proposition~\ref{Prop1}, and we implement the following modifications to Algorithm~\ref{Algorithm1} to solve~\eqref{eq:K_DeePC3}:
\begin{itemize}
  \item Change the cost function from $c(u,g)$ to $c_q(u,g)$;
  \item Change Step~3 in the iteration to: \\
  $g_{(i)} = M_r r + M_k k(u_{\rm ini},\hat y_{\rm ini},u_{(i)})$ \\
  if $g_{(i)} \notin \mathcal{G}$, $g_{(i)} \leftarrow {\rm arg} \mathop {\rm min} \limits_{g \in \mathcal{G}} c_q(u_{(i)},g) $.
\end{itemize}

Step~3 becomes easier to solve if the output constraint set $\mathcal{Y}$ is a nonempty box constraint (i.e., upper and lower bounds for the outputs are considered), as the constraint set $\mathcal{G}$ can be formulated as $\mathcal{G} = \{ g \; | \; G_1g + {\bf 1}\|g\| \le q_1 \}$ for some $G_1$ and $q_1$~\cite{bdv15,huang2021robust}.
The computation cost can be further reduced by assuming that $\|g\|$ is bounded and (conservatively) restrict $\mathcal{G}$ to be a polyhedron $\mathcal{G} = \{ g \; | \; G_1g \le q' \}$ for some $q'$. In our applications, we do not find these assumptions to be limiting.

We remark that one can also assume a kernel function that is nonlinear in $(u_{\rm ini},y_{\rm ini})$ but linear in $u$ such that~\eqref{eq:quad_cost} becomes a convex quadratic program, which can be more efficiently solved using standard solvers. For instance, one may consider
\begin{equation}\label{eq:kernel_linear_u}
\begin{split}
K(x_i,{\rm col}(u_{\rm ini},y_{\rm ini},u)) = &
\exp({-\frac{\|x_{1,i}-{\rm col}(u_{\rm ini},y_{\rm ini})\|^2}{2\sigma^2}}) \\
& + x_{2,i}^\top u \,,
\end{split}
\end{equation}
where $x_i = {\rm col}(x_{1,i},x_{2,i})$. Note that~\eqref{eq:kernel_linear_u} is a combination of two positive semidefinite kernels and thus also satisfies the positive semidefinite property. If the future outputs of the system is linear in $u$, such a kernel can be used to accurately capture the system's dynamics. Otherwise, the obtained kernel-based predictor is the linear-input model that best fits the observed data, which may not lead to an accurate prediction.

\section{Simulation Results}

In this section, we test the RoKDeePC algorithm on two example nonlinear systems to illustrate its effectiveness. We will compare the performance when different kernel functions are used, and we will compare the RoKDeePC algorithm with the traditional DeePC algorithm, the kernel-based MPC in~\eqref{eq:KMPC}, and the Koopman-based MPC in~\cite{korda2018linear}.

\subsection{Example~1: a nonlinear second-order model}
\label{sec:VA}

We start with an academic case study to compare different predictors.
Consider the following discrete-time nonlinear model of a polynomial single-input-single-output system
\begin{equation}\label{eq:testsys1}
  y_t = 4 y_{t-1} u_{t-1} - 0.5 y_{t-1} + 2 u_{t-1} u_t + u_t \,.
\end{equation}
To excite the system and collect input and output data, we inject white noise (mean: 0; variance: 0.01) into the input channel and collect an input/output trajectory of length $T = 600$, which will be used in different predictors below. The observed data may also be subject to additive white measurement noise.

We first test the prediction quality of different predictors, including: i) the linear predictor in~\eqref{eq:linear_predictor}, ii) the kernel predictor in~\eqref{eq:K_pred}, and iii) the Koopman-based predictor in~\cite{korda2018linear}. The parameters in~\eqref{eq:linear_predictor} and~\eqref{eq:K_pred} are: $T_{\rm ini} = 1$, $N = 5$, and $\gamma = 0.01$.
We consider three popular kernels for the kernel predictor: 1) Polynomial kernel $K(x_i,x_j) = (x_i^\top x_j + 1)^{10}$; 2) Gaussian kernel $K(x_i,x_j) = \exp({-\frac{\|x_i-x_j\|^2}{0.4}})$; 3) Exponential kernel $K(x_i,x_j) = \exp({\frac{x_i^\top x_j}{0.2}})$. In this section, the parameters in these kernel functions are chosen to ensure that the kernel predictors have satisfactory performance when noiseless data is available. For instance, the Gaussian kernel has only one parameter to tune, and thus an appropriate value can be easily found with a line search.

For the Koopman-based predictor, one can consider $x_t = (u_{t-1},y_{t-1})$ as the state variables of the above system. Similar to~\cite{korda2018linear}, we consider the lifting functions in the form of
\begin{equation}\label{eq:liftingK}
  \phi_i(x_t,u_t) = \psi_i(x_t) + \mathcal{L}_i(u_t) \,,
\end{equation}
where $\psi_i:\mathbb{R}^n \to \mathbb{R}$ is in general nonlinear but $\mathcal{L}_i: \mathbb{R}^m \to \mathbb{R}$ is linear. The lifting functions $\psi_i$ are chosen to be the linear and quadratic terms of the state variables and 10 thin plate spline radial basis functions (defined by $\psi(x) = \|x - x_0\|^2 {\log}(\|x - x_0\|)$ whose center is $x_0$) with centers selected randomly from $[-1.5,1.5]^2$. Under this setting, the input/output data can be used in the extended dynamic mode decomposition (EDMD)~\cite{williams2015data} to solve for the parametric system representation (Koopman-based predictor). The lifting functions in~\eqref{eq:liftingK} leads to a quadratic program for the MPC formulation, which can be solved efficiently in practice.


\renewcommand\arraystretch{2}
\begin{table}
	\scriptsize
	\centering
	\caption{Prediction comparison under different noise levels. }
    \vspace{-1mm}
	
    \begin{tabular}{|c|c|c|}	
    \hline
    \makecell{Prediction Error\\ $\sum \|\Delta y\|^2$ } & Without measurement noise & \makecell{With white noise\\ (variance: $10^{-3}$)} \\
    \hline
    Linear predictor & 0.2378 & 0.2384 \\
    \hline
    \makecell{Koopman predictor\\ (linear in $u_t$)} & 0.1927 & 0.2175 \\
    \hline
    \makecell{Kernel predictor \\ (Polynomial kernel)}  & 0.0008 & 0.0486 \\
    \hline
    \makecell{Kernel predictor \\ (Gaussian kernel)}  & 0.0051 & 0.0463 \\
    \hline
    \makecell{Kernel predictor \\ (Exponential kernel)}  & 0.0030 & 0.0456 \\
    \hline
    \end{tabular}
	\vspace{-1mm}
	\label{table:prediction}
\end{table}

\renewcommand\arraystretch{2}
\begin{table*}
	\scriptsize
	\centering
	\caption{Comparison of realized closed-loop performance under different measurement noise levels. }
    \vspace{-1mm}
	
    \begin{tabular}{|c|c|cc|cc|cc|}	
    \hline
    \makecell{noise level} & \makecell{realized\\costs} & \makecell{RoKDeePC\\Gaussian kernel} & \makecell{Kernel-based MPC\\Gaussian kernel} & \makecell{RoKDeePC\\Exponential kernel} & \makecell{Kernel-based MPC\\Exponential kernel} & \makecell{RoKDeePC\\Polynomial kernel} & \makecell{Kernel-based MPC\\Polynomial kernel} \\
    \hline
    \multirow{2}*{\makecell{variance:\\$10^{-4}$}} & \makecell{mean} & 8.92 & 8.51 & 6.07 & 6.16 & 7.31 & 7.36 \\
    \cline{2-8}
    ~ & \makecell{standard\\ deviation} & 4.17 & 3.72 & 1.97 & 2.01 & 3.51 & 3.25 \\
    \hline

    \multirow{2}*{\makecell{variance:\\$10^{-3}$}} & \makecell{mean} & 25.06 & 28.19 & 25.11 & 29.37 & 34.94 & 40.77 \\
    \cline{2-8}
    ~ & \makecell{standard\\ deviation} & 17.99 & 19.24 & 15.31 & 17.66 & 24.97 & 31.43 \\
    \hline

    \multirow{2}*{\makecell{variance:\\$1.5 \times 10^{-3}$}} & \makecell{mean} & 30.76 & \makecell{37.54\\(+22\%)} & 31.89 & \makecell{40.71\\(+28\%)} & 46.76 & \makecell{59.36\\(+27\%)} \\
    \cline{2-8}
    ~ & \makecell{standard\\ deviation} & 24.29 & 28.06 & 22.15 & 27.03 & 36.32 & 50.50 \\
    \hline

    \end{tabular}
	\vspace{-1mm}
	\label{table:cost_compare}
\end{table*}

Table~\ref{table:prediction} compares the open-loop predictions of 50 time steps obtained by employing different predictors. For the linear predictor and the kernel predictor (whose prediction horizon is $N=5$), the time steps from $iN+1$ to $(i+1)N$ ($i = \{1,...,9\}$) are predicted based on the previous predictions. It can be seen that the linear predictor cannot make an accurate prediction due to the strong nonlinearity of the system. The Koopman-based predictor also cannot make an accurate prediction because \eqref{eq:liftingK} assumes independence between $x_t$ and $u_t$, which is not the case in~\eqref{eq:testsys1}. Moreover, we observe similar prediction errors even with more data in the EDMD, e.g., $T=200000$ (data not shown).
The structure in~\eqref{eq:liftingK} enables fast calculations of the associated MPC problem, but restricts the applicability in some classes of nonlinear systems. One can also assume a more general form of $\phi_i$ to improve the prediction accuracy, but it is out of the scope of this paper.

By comparison, all kernel predictors can predict the system's behaviors with satisfactory accuracy. Moreover, when there is no measurement noise, the predictor with polynomial kernel almost perfectly predicts the system's behaviors, as the future outputs are indeed polynomial functions of $(u_{t-1},y_{t-1},u_t)$ due to the bilinear structure in~\eqref{eq:testsys1}.

\begin{figure}[!t]
	\centering
	\includegraphics[width=3.2in]{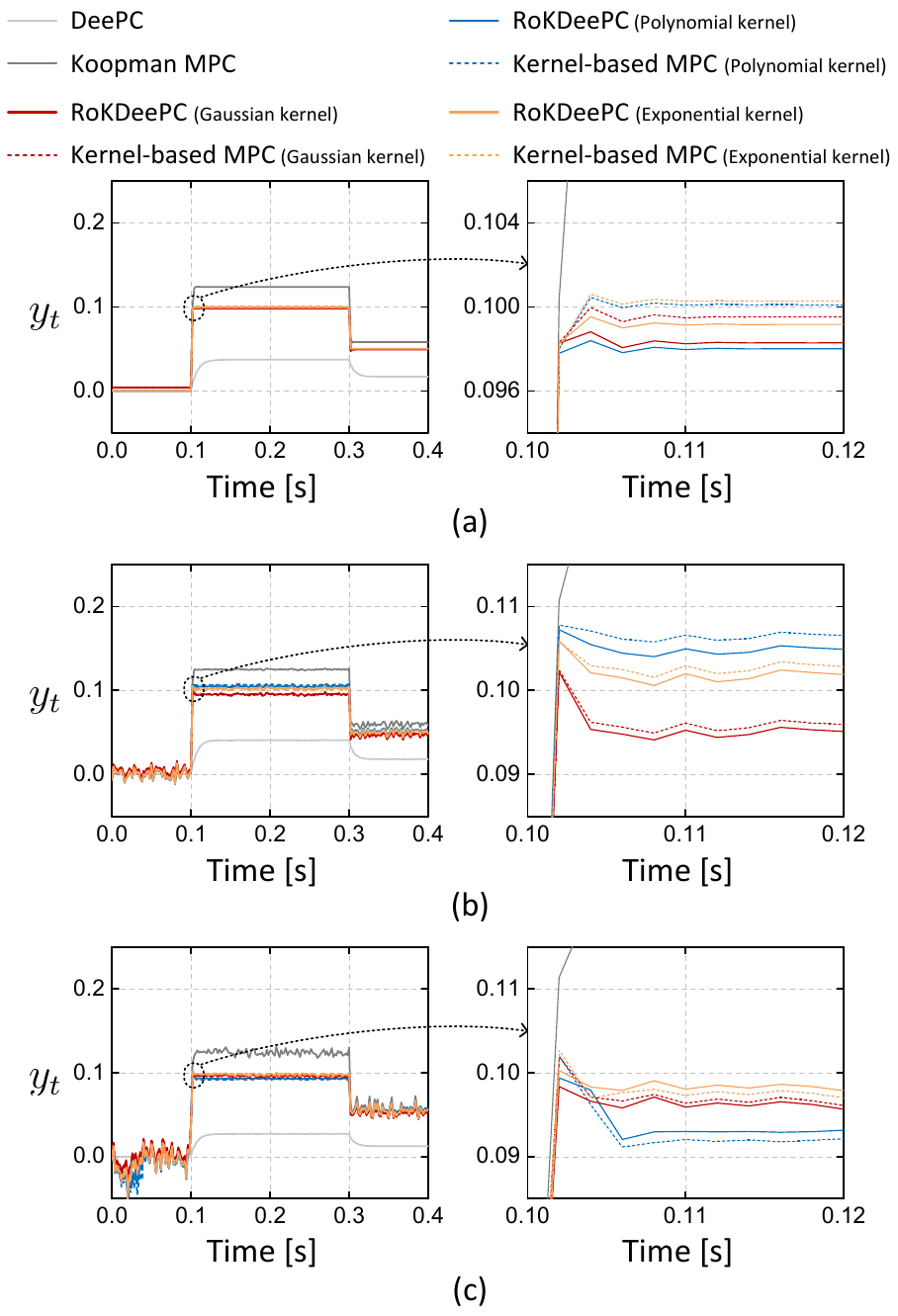}
	\vspace{-2mm}
	\caption{Time-domain responses of the system with different control methods and different measurement noise levels. (a) Without measurement noise. (b) White noise (variance: $10^{-4}$). (c) White noise (variance: $10^{-3}$).}
	\vspace{0mm}
	\label{Fig_timedomain}
\end{figure}

We next compare the time-domain performance when different control methods are applied: 1) DeePC with quadratic regularization~\cite{huang2020quad}; 2) Koopman-based MPC~\cite{korda2018linear}; 3) nominal (i.e., non-robustified) Kernel-based MPC in~\eqref{eq:KMPC} (the cost function is modified to be $\ell(u) + \|y-r\|_Q^2$); 4) RoKDeePC in~\eqref{eq:K_DeePC3}. We consider the following input and output costs in all these methods: $\ell(u) + \|y-r\|_Q^2 = \sum\nolimits_{t=1}^{N} (\|u_t\|^2 + 10^3 \|y_t - r_t\|^2) + \sum\nolimits_{t=2}^{N} 10^2 \|u_t - u_{t-1}\|^2$, where $r_t$ is the output reference over time. This cost function penalizes the control inputs, the tracking errors, and the rates of changes of control inputs. The other parameters of the RoKDeePC are: $\lambda'_k = 10^8$, $\lambda_g = 1$. The control horizon is $k=1$ in all the methods. We assume no input/output constraints and focus on comparing the tracking performance. Hence, no projection is needed in the gradient descent algorithm.

Fig.~\ref{Fig_timedomain} shows the time-domain responses of the system when different control methods are applied, where the output reference $r_t$ steps from $0$ to $0.1$ at $t = 0.1{\rm s}$, and then to $0.05$ at $t = 0.3{\rm s}$. The sampling time of the discrete-time system is 2ms.
It can be seen that when DeePC or Koopman-based MPC is applied, the output cannot accurately track the reference because the linear predictor and the Koopman predictor (linear in $u_t$) cannot make an accurate prediction (Table~\ref{table:prediction}).
By comparison, the RoKDeePC and the kernel-based MPC both achieve satisfactory performance even under measurement noise.

We remark that with the gradient descent algorithm in Section~\ref{sec:GD}, the RoKDeePC has a reasonable running time. On an Intel Core i7-9750H CPU with 16GB RAM, it requires approximately 0.2s to solve~\eqref{eq:K_DeePC3} (with any of the considered kernels) every time in the above simulations. The kernel-based MPC requires similar time to solve~\eqref{eq:KMPC} using a similar gradient descent algorithm. By comparison, the Koopman-based MPC and DeePC only require to solve quadratic programs, which take about 2ms and 10ms respectively using OSQP~\cite{osqp} as the solver. However, their performance is vastly inferior.

To compare the robustness of the RoKDeePC and the kernel-based MPC, the above simulations were repeated 100 times with different datasets to construct the Hankel matrices and different random seeds to generate the measurement noise. Table~\ref{table:cost_compare} displays the mean and standard deviation of realized closed-loop costs, i.e., $\sum\nolimits_{t=1}^{200} (\|\bar u_t\|^2 + 10^3 \|\bar y_t - r_t\|^2) + \sum\nolimits_{t=2}^{200} 10^2 \|\bar u_t - \bar u_{t-1}\|^2$ where $\bar u_t$ and $\bar y_t$ are measured from the system in the simulations.
It shows that with the increase of the measurement noise level, the RoKDeePC achieves superior performance than the (certainty-equivalence) kernel-based MPC. For instance, when the variance of the measurement noise is $1.5 \times 10^{-3}$, the averaged closed-loop costs of the RoKDeePC are respectively $22\%$ (Gaussian kernel), $28\%$ (Exponential kernel), and $27\%$ (Polynomial kernel) lower than those of the kernel-based MPC. We attribute this performance gap to the robustness of the RoKDeePC, namely, the control sequence obtained from RoKDeePC is robust against the uncertainties in data; by comparison, the kernel-based MPC implicitly assumes certainty equivalence and may not lead to a robust solution. Moreover, we observe that the Gaussian kernel and Exponential kernel have better performance than the Polynomial kernel under noisy measurements, even though the future behaviors of the systems can be described by polynomial functions.


\subsection{Example~2: A grid-forming converter with nonlinear load}

We now test the performance of the RoKDeePC in a grid-forming converter that is feeding a local nonlinear load, as shown in Fig.~\ref{Fig_converter}. The grid-forming converter employs virtual synchronous machine control for synchronization, which emulates the swing equation of synchronous generators to generate the frequency\cite{dorfler2012synchronization, d2015virtual, yang2020placing}.
The converter is connected to a local nonlinear load, whose active and reactive power consumptions are (in per-unit values)
\begin{equation*}
\begin{split}
P_{\rm Load} = \; & 0.3 + 0.2 U^3 + 10 \Delta U^2 + 5 \Delta U \,, \\
Q_{\rm Load} = \; & 0.04 + 8 \Delta U^2 + 2 \Delta U \,,
\end{split}
\end{equation*}
where $U$ is the terminal voltage magnitude of the load, and $\Delta U = U - U_0$ is the voltage deviation from its nominal value $U_0 = 1{(\rm p.u.)}$. The other system parameters are: $L_F = 0.2{(\rm p.u.)}$, $C_F = 0.07{(\rm p.u.)}$, $L_g = 0.4{(\rm p.u.)}$, $\omega = 100\pi {\rm (rad/s)}$, $P_0 = 0.5{(\rm p.u.)}$, $J = 0.02$, and $D = 0.08$.

\begin{figure}[!t]
	\centering
	\includegraphics[width=3.5in]{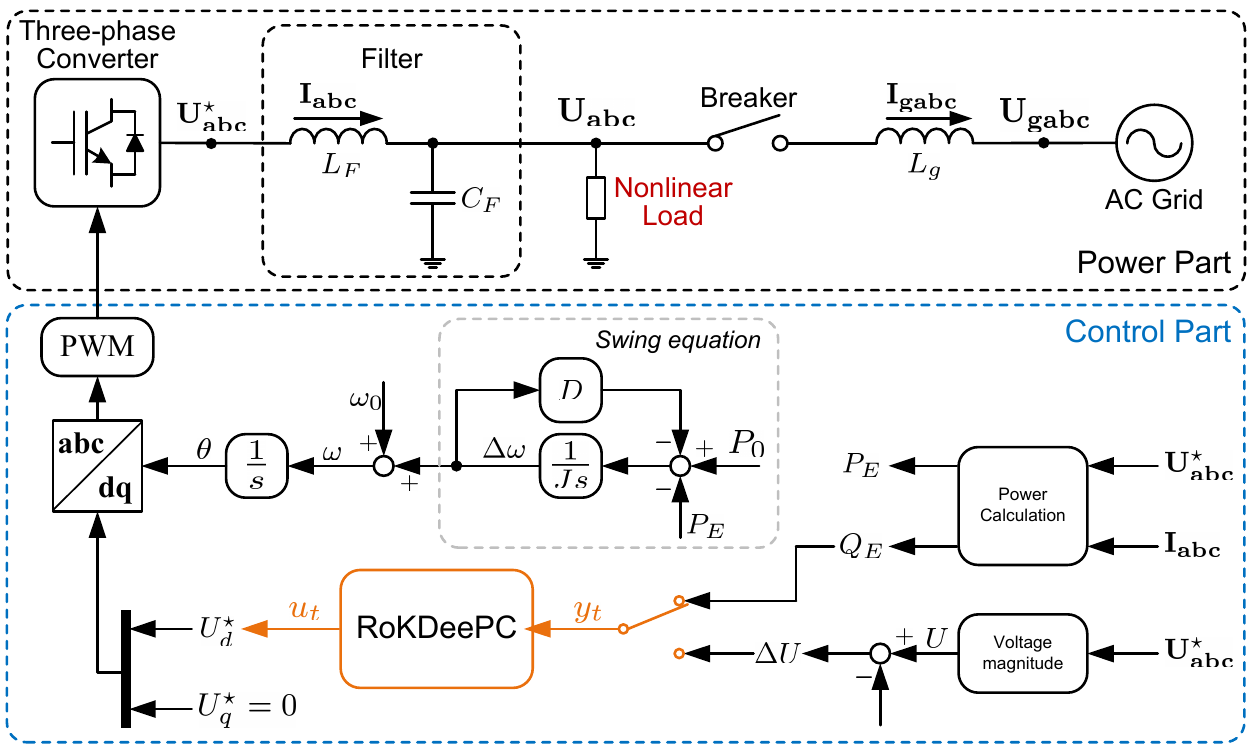}
	\vspace{-4mm}
	\caption{Application of RoKDeePC in a grid-forming converter feeding a nonlinear load.}
	\vspace{0mm}
	\label{Fig_converter}
\end{figure}

We apply the RoKDeePC algorithm to perform optimal voltage (or reactive power) control in the converter. As shown in Fig.~\ref{Fig_converter}, the input of the converter system is the internal voltage magnitude of the converter (i.e., $U_d^\star$), and the output can either be the reactive power $Q_E$ or the terminal voltage deviation of the load $\Delta U$. Since the converter system is more complex than~\eqref{eq:testsys1}, we choose $T_{\rm ini} = 4$, $T = 1500$, $N = 6$, and a control horizon of $k=1$. The other settings of the RoKDeePC are the same as those in the previous subsection (cost function, excitation signals, etc.). We again assume no input/output constraints and focus on comparing the tracking performance. For the Koopman-based MPC, we consider $(u_{\rm ini},y_{\rm ini})$ as the state variables of the system, and the choices of the lifting functions are the same as those in the previous subsection.

\begin{figure}[!t]
	\centering
	\includegraphics[width=3.2in]{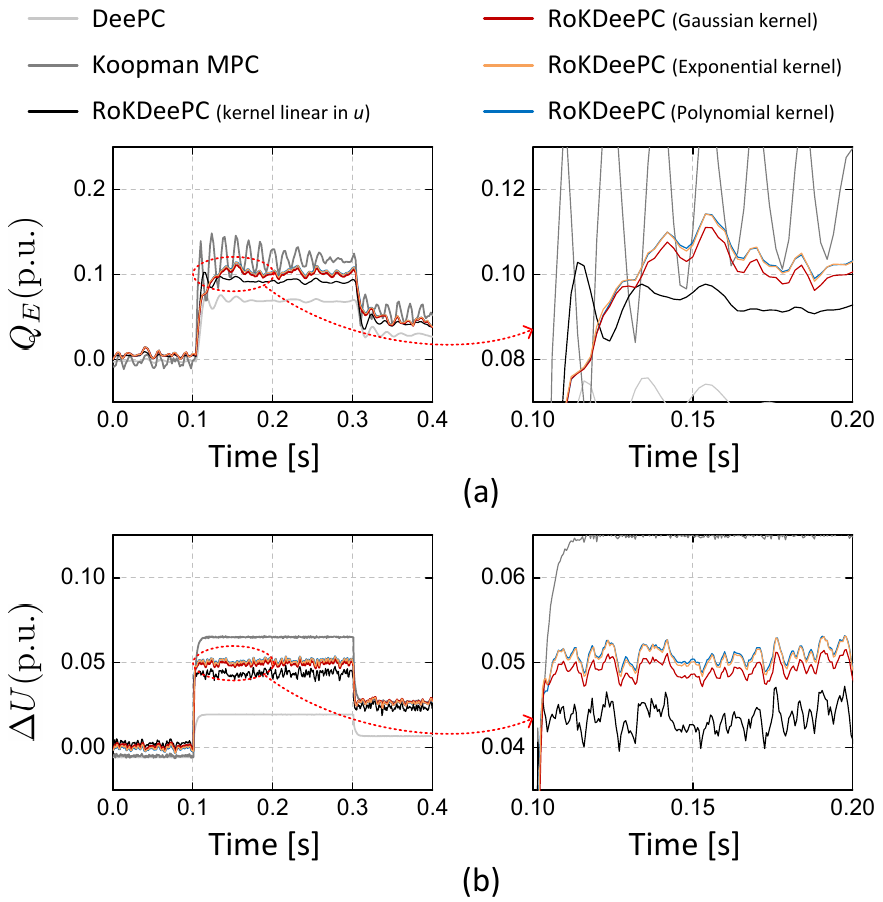}
	\vspace{-4mm}
	\caption{Time-domain responses of the converter system (variance of the measurement noise: $5 \times 10^{-5}$). (a) Grid-connected mode (reactive power is the output). (b) Islanded mode (voltage deviation is the output).}
	\vspace{0mm}
	\label{Fig_QV_control}
\end{figure}

Fig.~\ref{Fig_QV_control}~(a) shows the reactive power responses of the converter in grid-connected mode (i.e., the breaker in Fig.~\ref{Fig_converter} is closed), and we choose the reactive power $Q_E$ to be the output of the system. The sampling time of the system is 2ms. The reference for $Q_E$ steps from $0$ to $0.1{\rm (p.u.)}$ at $t = 0.1{\rm s}$, and then to $0.05{\rm (p.u.)}$ at $t = 0.3{\rm s}$. It can be seen that the converter has satisfactory performance when the RoKDeePC is applied (with Gaussian, Exponential, or Polynomial kernel). However, when the Koopman-based MPC or DeePC is applied, there are severe tracking errors; with the former we also observe some oscillations that persist even with more data to solve the EDMD problem and a longer prediction horizon (e.g., $T = 20000$ and $N = 20$). Similar to the results in Section~\ref{sec:VA}, this is because of the strong nonlinearity of the systems and the special structure of the lifting functions in~\eqref{eq:liftingK} that are linear in $u_t$. 
With the kernel-based MPC in~\eqref{eq:KMPC} (with Gaussian, Exponential, or Polynomial kernel), the system became unstable and thus the time-domain responses are not displayed in Fig.~\ref{Fig_QV_control}.

Although the RoKDeePC (with Gaussian, Exponential, or Polynomial kernel) has superior performance, it requires solving a nonlinear program in real time. In this instance, it took approximately 0.6s to solve~\eqref{eq:K_DeePC3}. By comparison, the Koopman-based MPC and DeePC require about 2ms and 80ms respectively using OSQP as the solver.

To enable a real-time implementation of the RoKDeePC, we further consider a kernel that is nonlinear in $(u_{\rm ini},y_{\rm ini})$ but linear in $u$, as shown in~\eqref{eq:kernel_linear_u}, where $\sigma^2 = 0.2$. With this kernel, \eqref{eq:K_DeePC3} becomes a quadratic program, which admits a closed-form solution when the input/output constraints are inactive.
Fig.~\ref{Fig_QV_control}~(a) also plots the reactive power response of the converter when this kernel is used, which shows better performance than the Koopman-based MPC and DeePC (with $k=1$, the tracking error is less then $10\%$). Moroever, it only requires about 10ms to calculate the closed-form solutions at each time step, and thus enables a possible real-time implementation with a control horizon $k \geq 6$.

We next test the control performance when the converter is in an islanded mode (i.e., the breaker in Fig.~\ref{Fig_converter} is open). We choose the voltage deviation $\Delta U$ to be the output of the system, that is, the RoKDeePC aims at regulating the voltage deviation of the load. Since the voltage dynamics are in general fast, we choose the sampling time to be 0.5ms. Fig.~\ref{Fig_QV_control}~(b) shows the time-domain responses of the voltage deviation when its reference steps from $0$ to $0.05{\rm (p.u.)}$ at $t = 0.1{\rm s}$, and then to $0.025{\rm (p.u.)}$ at $t = 0.3{\rm s}$. We observe that the RoKDeePC (with Gaussian, Exponential, or Polynomial kernel) achieves the best performance, whereas the Koopman-based MPC and DeePC lead to substantial tracking error. The tracking error is about $10\%$ when the kernel in~\eqref{eq:kernel_linear_u} is used in the RoKDeePC to enable faster calculations.


\section{Conclusions}

RoKDeePC, an algorithm to perform data-driven, robust, and optimal control for general nonlinear systems based on implicitly predicting the future behaviors of the system using the representer theorem was presented. RoKDeePC involves solving a data-to-control optimization problem in a receding-horizon manner, which does not require any time-consuming off-line learning step. Moreover, we demonstrate how to provide end-to-end robustification for the control sequence against measurement noise in the output data, leading to strong performance guarantees. To handle the nonconvexity of the optimization problem, we exploited its structure and developed projected gradient descent algorithms to enable fast calculations of the (sub)optimal solutions. The RoKDeePC was tested in two example nonlinear systems (including a high-fidelity converter model feeding a nonlinear load) and showed excellent performance with reasonable running time, compatible with real-time implementations in real-world applications. A comparison with related nonlinear data-driven MPC methods showed the favorable performance of our approach.

\section*{Acknowledgment}

The authors would like to thank Liviu Aolaritei, Francesco Micheli, and Jianzhe Zhen for fruitful discussions.

\appendices
\section{Proof of Proposition~\ref{Prop1}}
\label{proof_prop1}

\begin{proof}

Since $\mathcal{Y}$ is a polyhedron, we can compactly rewrite $\mathcal{G}$ as $\mathcal{G} = \{g \ | \ Gg+ \mathcal{N}_G(g) \le q\}$ where $\mathcal{N}_G(g) = [\|G^{(1)}g\|\; \|G^{(2)}g\|\; \cdots \; \|G^{(n_q)}g\|]^\top$ for some $G \in \mathbb{R}^{n_q \times H_c}$, $q \in \mathbb{R}^{n_q}$, and $G^{(i)} \in \mathbb{R}^{pNH_c \times H_c},\; \forall i \in [n_q]$~\cite{bdv15}.

We start by respectively rewriting~\eqref{eq:min_cq} and~\eqref{eq:min_c} as
\begin{equation}\label{eq:cq1}
\mathop{\min} \limits_{g \in \mathcal{G}}\; \|\hat Y_{\rm F}g - r\|_Q^2 + \lambda_g \|g\|^2
+ \lambda_k'\| Vg-v \|^2 + c_u \,,
\end{equation}
\begin{equation}\label{eq:c1}
\mathop{\min} \limits_{g \in \mathcal{G}}\; \|\hat Y_{\rm F}g - r\|_Q + h(g)
+ \lambda_k\| Vg-v \| + c_u \,,
\end{equation}
where $c_u = \ell(u)$ is a constant, $V=\mathbf{\hat K}+\gamma I$, $v=k(u_{\rm ini},\hat y_{\rm ini},u)$, and $h(g) = \lambda_k\rho_1 \sqrt{\|g\|^2+1} + \rho_2 \|g\|$.

Consider the Lagrangian of~\eqref{eq:cq1}
\begin{equation*}
\begin{split}
\mathcal{L}_q(g,\mu_q) = & \hspace{0.8mm} \|\hat Y_{\rm F}g - r\|_Q^2 + \lambda_g \|g\|^2
+ \lambda_k'\| Vg-v \|^2 \\
& + \mu_q^\top (Gg+\mathcal{N}_G(g)-q) \,,
\end{split}
\end{equation*}
where $\mu_q$ is the vector of the dual variables. Since $\mathcal{G}$ is nonempty, there exists a solution $(g^\star,\mu_q^\star)$ to the Karush–Kuhn–Tucker (KKT) conditions of~\eqref{eq:cq1}
\begin{subequations}
    \begin{empheq}[left={\empheqlbrace}]{align}
	& \begin{array}{r} \hspace{-1.6mm}
2 \hat Y_{\rm F}^\top Q (\hat Y_{\rm F}g - r) + 2 \lambda_g g  + 2\lambda'_k V^\top (Vg-v)  \\
+ \hspace{0.5mm} G^\top \mu_q + \sum_{i \in [n_q]}\mu_{q[i]}  \frac{(G^{(i)})^\top G^{(i)}g}{\|G^{(i)}g\|} = 0 \,,
\end{array}
\label{eq:KKT_LQ1}\\
	& \mu_q^\top (Gg + \mathcal{N}_G(g) - q) = 0  \,,   \label{eq:KKT_LQ2}  \\
	& Gg + \mathcal{N}_G(g) \le  q            \,,     \label{eq:KKT_LQ3}  \\
	& \mu_q \ge 0          \,.       \label{eq:KKT_LQ4}
    \end{empheq}
    \end{subequations}
where $g^\star$ is a minimizer of~\eqref{eq:cq1} (and~\eqref{eq:min_cq}).

Consider the Lagrangian of~\eqref{eq:c1}
\begin{equation*}
\begin{split}
\mathcal{L}(g,\mu) = & \hspace{0.8mm} \|\hat Y_{\rm F}g - r\|_Q + \lambda_k\rho_1 \sqrt{\|g\|^2+1} + \rho_2 \|g\|
\\
& + \lambda_k\| Vg-v \| + \mu^\top (Gg+\mathcal{N}_G(g)-q) \,,
\end{split}
\end{equation*}
where $\mu$ is the vector of the dual variables. By choosing $\lambda_k$, $\rho_1$, and $\rho_2$ according to~\eqref{eq:lambda_k_eq} and~\eqref{eq:lambda_g_eq}, it can be verified that $(g^\star, \mu^\star, y^\star, w^\star, z^\star)$, where
\begin{equation*}
	(\mu^\star,  y^\star)  = \begin{cases}
	
	\left( \frac{\mu_q^\star}{2 \|\hat Y_{\rm F}g^\star - r\|_Q}, \frac{\hat Y_{\rm F}^\top Q(\hat Y_{\rm F}g^\star - r)}{\|\hat Y_{\rm F}g^\star - r\|_Q} \right)  \quad
	\text{if $\hat Y_{\rm F}g^\star \ne r$,}\\
	\left(\frac{\mu_q^\star}{2} , 0 \right) \quad  \text{otherwise,} \end{cases}
\end{equation*}
\begin{equation*}
	z^\star  = \begin{cases}
	\frac{\lambda_k V^\top (Vg^\star-v)}{\|Vg^\star-v\|}   \quad
	\text{if $Vg^\star \ne v$,}\\
	0 \quad  \text{otherwise,} \end{cases}
\end{equation*}
$w^\star = \rho_2 g^\star / \|g^\star\|$ if $g^\star \ne 0$, and $w^\star = 0$ otherwise,
and $g^\star$ as before, satisfy the KKT conditions of~\eqref{eq:c1}
\begin{subequations}
    \begin{empheq}[left={\empheqlbrace}]{align}
	& \hspace{-1.5mm} \begin{array}{l} y + \frac{\lambda_k \rho_1 g}{\sqrt{\|g\|^2+1}} + w + z + G^\top \mu \\
    \hspace{5mm} + \sum_{i \in [n_q]}\mu_{[i]}  \frac{(G^{(i)})^\top G^{(i)}g}{\|G^{(i)}g\|} = 0 \,,
    \end{array}
\label{eq:KKT_L1}\\
    & \hspace{-2mm} \begin{array}{r} \|\hat Y_{\rm F}\bar g - r\|_Q \ge \|\hat Y_{\rm F}g - r\|_Q + y^\top (\bar g - g), \\ \forall \bar g \in \mathbb{R}^{H_c},
    \end{array}
\label{eq:KKT_L2}  \\ 
	& \rho_2 \| \bar g\| \ge \rho_2 \|g\| + w^\top (\bar g - g), \; \forall \bar g \in \mathbb{R}^{H_c} ,
\label{eq:KKT_L3}  \\
	& \hspace{-2mm} \begin{array}{r} \lambda_k \|V g - v\| \ge \lambda_k \|V g - v\| + z^\top (\bar g - g), \\ \forall \bar g \in \mathbb{R}^{H_c},
    \end{array}
\label{eq:KKT_L4}  \\
    & \mu^\top (Gg+ \mathcal{N}_G(g)-q) = 0 \,,
\label{eq:KKT_L5}  \\
    & Gg+ \mathcal{N}_G(g) \le q \,,
\label{eq:KKT_L6}  \\
    & \mu \ge 0 \,.
\label{eq:KKT_L7}
    \end{empheq}
    \end{subequations}
Thus, the vector $g^\star$ is also a minimizer of~\eqref{eq:c1} (and~\eqref{eq:min_c}).

Next we prove the monotonic relationship  between $\lambda_k$ and $\lambda_k'$. Let $g_1$ be the minimizer of~\eqref{eq:cq1} with $\lambda_k' = \lambda_{k1}' >0$ (and the minimizer of~\eqref{eq:c1} with $\lambda_k = \lambda_{k1}$), and $g_2$ the minimizer of~\eqref{eq:cq1} with $\lambda_k' = \lambda_{k2}' > \lambda_{k1}'$ (and the minimizer of~\eqref{eq:c1} with $\lambda_k = \lambda_{k2}$). If $g_1 = g_2$, we directly obtain $\lambda_{k1} < \lambda_{k2}$ according to~\eqref{eq:lambda_k_eq}.
If $g_1 \ne g_2$, according to the definitions of $g_1$ and $g_2$, we have
\begin{equation*}
\begin{split}
& \|\hat Y_{\rm F}g_1 - r\|_Q^2 + \lambda_g \|g_1\|^2
+ \lambda_{k1}'\| Vg_1-v \|^2 \\
<\; & \|\hat Y_{\rm F}g_2 - r\|_Q^2 + \lambda_g \|g_2\|^2
+ \lambda_{k1}'\| Vg_2-v \|^2 \,,
\end{split}
\end{equation*}
which can be reorganized as
\begin{equation}\label{eq:g1g2_1}
\begin{split}
& \|\hat Y_{\rm F}g_1 - r\|_Q^2 + \lambda_g \|g_1\|^2
+ \lambda_{k1}'(\| Vg_1-v \|^2 - \| Vg_2-v \|^2) \\
& <  \|\hat Y_{\rm F}g_2 - r\|_Q^2 + \lambda_g \|g_2\|^2 \,.
\end{split}
\end{equation}
Moreover, we have
\begin{equation*}
\begin{split}
& \|\hat Y_{\rm F}g_2 - r\|_Q^2 + \lambda_g \|g_2\|^2
+ \lambda_{k2}'\| Vg_2-v \|^2 \\
<\; & \|\hat Y_{\rm F}g_1 - r\|_Q^2 + \lambda_g \|g_1\|^2
+ \lambda_{k2}'\| Vg_1-v \|^2 \,,
\end{split}
\end{equation*}
which can be reorganized as
\begin{equation}\label{eq:g1g2_2}
\begin{split}
& \|\hat Y_{\rm F}g_1 - r\|_Q^2 + \lambda_g \|g_1\|^2
+ \lambda_{k2}'(\| Vg_1-v \|^2 - \| Vg_2-v \|^2) \\
&> \|\hat Y_{\rm F}g_2 - r\|_Q^2 + \lambda_g \|g_2\|^2 \,.
\end{split}
\end{equation}
By combining~\eqref{eq:g1g2_1} and~\eqref{eq:g1g2_2}, we obtain
\begin{equation*}
\begin{split}
& \lambda_{k1}'(\| Vg_1-v \|^2 - \| Vg_2-v \|^2) \\
< \; & \lambda_{k2}'(\| Vg_1-v \|^2 - \| Vg_2-v \|^2) \,,
\end{split}
\end{equation*}
which indicates that $\| Vg_1-v \|^2 > \| Vg_2-v \|^2$ as $\lambda_{k2}' > \lambda_{k1}' > 0$. Then, we have $\| Vg_1-v \| > \| Vg_2-v \|$.

According to the definitions of $\lambda_{k1}$ and $\lambda_{k2}$, we also have
\begin{equation*}
\begin{split}
& \|\hat Y_{\rm F}g_1 - r\|_Q + h(g_1)
+ \lambda_{k1}(\| Vg_1-v \| - \| Vg_2-v \|) \\
& <  \|\hat Y_{\rm F}g_2 - r\|_Q + h(g_2) \,,
\end{split}
\end{equation*}
\begin{equation*}
\begin{split}
& \|\hat Y_{\rm F}g_1 - r\|_Q + h(g_1)
+ \lambda_{k2}(\| Vg_1-v \| - \| Vg_2-v \|) \\
& >  \|\hat Y_{\rm F}g_2 - r\|_Q + h(g_2) \,,
\end{split}
\end{equation*}
leading to
\begin{equation*}
\begin{split}
& \lambda_{k1} (\underbrace{\| Vg_1-v \| - \| Vg_2-v \|}_{>0})  \\
< \hspace{0.6mm} & \lambda_{k2}(\underbrace{\| Vg_1-v \| - \| Vg_2-v \|}_{>0})
\end{split}
\end{equation*}
It can then be deduced that $\lambda_{k1} < \lambda_{k2}$. Hence, $\lambda_k$ is increasing with the increase of $\lambda_k'$. By following a similar process, one can also prove that $\rho_1$ ($\rho_2$) is increasing with the increase of $\lambda_g$. This completes the proof.
\end{proof}

	\normalem
    \bibliographystyle{IEEEtran}
	\bibliography{references}

\end{document}